\newtheorem{Theorem}{Theorem}[section]
\newtheorem{lem}[Theorem]{Lemma}
\newtheorem{Remark}[Theorem]{Remark}
\newtheorem{Definition}[Theorem]{Definition}
\newtheorem{Corollary}[Theorem]{Corollary}
\newtheorem{Example}[Theorem]{Example}
\numberwithin{equation}{section}
\newcommand{\dst}[2]{\genfrac{[}{]}{0pt}{0}{#1}{#2}}
\begin{document}

\title {Optimal Few-GHW Linear Codes and Their Subcode Support Weight Distributions 
}

\author{Xu Pan$^1$,~Hao Chen$^2$,~Hongwei Liu$^3$,~Shengwei Liu$^4$ 
}
\date{\small
$^1$School of Information Science and Technology/Cyber Security,
Jinan University, Guangzhou, Guangdong, 510632, China, panxucode@163.com\\
$^2$School of Information Science and Technology/Cyber Security,
Jinan University, Guangzhou, Guangdong, 510632, China, haochen@jnu.edu.cn\\
$^3$School of Mathematics and Statistics, Key Lab NAA-MOE, Central China Normal University, Wuhan, Hubei, 430079, China, hwliu@ccnu.edu.cn\\
$^4$School of Mathematics and Statistics, Central China Normal University, Wuhan, Hubei, 430079, China, shengweiliu@mails.ccnu.edu.cn\\
}

\maketitle

\begin{abstract}
Few-weight codes have been constructed and studied for many years, since their fascinating relations to finite geometries, strongly regular graphs and Boolean functions. Simplex codes are one-weight Griesmer $[\frac{q^k-1}{q-1},k ,q^{k-1}]_q$-linear codes and  they meet all Griesmer bounds of the generalized Hamming weights of linear codes. All the subcodes with dimension $r$ of a $[\frac{q^k-1}{q-1},k ,q^{k-1}]_q$-simplex code have the same subcode support weight $\frac{q^{k-r}(q^r-1)}{q-1}$ for $1\leq r\leq k$. In this paper, we construct linear codes meeting the  Griesmer bound of the $r$-generalized Hamming weight, such codes do not meet the Griesmer bound of the $j$-generalized Hamming weight for $1\leq j<r$. Moreover these codes have only few subcode support weights. The weight distribution and the subcode support weight distributions of these distance-optimal codes are determined. 
Linear codes constructed in this paper are natural generalizations of distance-optimal few-weight codes.

\medskip
\textbf{Index Terms:} $r$-Griesmer code, optimal code, few generalized Hamming weight code, weight distribution, subcode support weight distribution.

\end{abstract}

\section{Introduction}
Throughout this paper, let $\mathbb{F}_q$ be the finite field of order $q$, where $q=p^e$ and $p$ is a prime. For a positive integer $n$, let $\mathbb{F}_q^{n}$ be the $n$-dimensional vector space over $\mathbb{F}_q$ consisting of all the $n$-tuples $(x_{1},x_{2},\cdots,x_{n}),$ where $x_i\in \mathbb{F}_q$ for all $1\leq i\leq n.$ For a real number $a$, the smallest integer that is greater than or equal to $a$ is denoted by $\lceil a\rceil$.

\subsection{The generalized Hamming weights of linear codes}

A nonempty subset $X$ of $\mathbb{F}_q^{n}$ is called an $(n,|X|)_q$-{\it code}, where $|X|$ is the size of $X$. For a subcode $X$, the {\it subcode support} of $X$ is $$\mathrm{supp}(X)=\{1\leq  i\leq n \,|\,\exists \,\mathbf{x}=(x_{1}, x_2, \cdots,x_{n})\in X\text{ with } x_{i}\neq0\},$$
and the {\it subcode support weight} of $X$ is $w(X)=  |\mathrm{supp}(X)|.$
For any vector $\mathbf{x}\in \mathbb{F}_q^{n}$,   the subcode support weight of the subspace generated by $\mathbf{x}$ is equal to the Hamming weight of $\mathbf{x}$ denoted by $w(\mathbf{x}).$

A subspace $C$ of $\mathbb{F}_q^{n}$ with dimension $k$ is called an {\it $[n,k]_q$-linear code} or an {\it $[n,k,d]_q$-linear code}, where $d$ is the minimum Hamming distance of $C$.
An $[n, k,d]_q$-linear code $C$ is called {\it distance-optimal} if no $[n, k,\tilde{d}]_q$-linear code such that $\tilde{d}> d$ exists.
For an $[n,k]_q$-linear code $C$ and $1\leq r\leq k$, the  {\it$r$-generalized Hamming weight} ($r$-GHW) of $C$ is defined as $$d_{r}(C)=\min\{|\mathrm{supp}(U)|\,\big|\,\text{$U$ is a subspace of $C$ with dimension $r$}\}.$$
The set $\{d_{1}(C),d_{2}(C),\cdots,d_{k}(C)\}$ is called the  {\it weight hierarchy} of $C$.
When $r=1$, the parameter $d_{1}(C)$ is the minimum Hamming distance of $C$.

In 1977, the notion of the generalized Hamming weights was introduced  by Helleseth, Kl{\o}ve and Mykkeltveit in \cite{HT}. Wei \cite{W} provided an application of the generalized Hamming weights in wire-tap channels of type II.
Since then, lots of works have been done in computing and describing the generalized Hamming weights for many classes of linear codes, such that Hamming codes \cite{W},
Reed-Muller codes \cite{B,HP1,W},
 binary Kasami codes \cite{HK95},
 Melas and dual Melas codes \cite{V1},
 cyclic codes \cite{CC,DS,FT,L,MP,SC95,V,V2,V3,YF},
 some trace codes \cite{CM,SV}, 
 some algebraic geometry codes \cite{V4,V5,YK,TV}, 
  linear codes relating to the defining set and  quadratic forms \cite{JFW,L1,L2,LZ} 
 and linear codes defined by simplicial complexes \cite{LZW}.
Liu and Pan introduced the notion of the generalized pair weights and the notion of the generalized $b$-symbol weights of linear codes, which are generalizations of the minimum symbol-pair weight and the minimum $b$-symbol weight of linear codes in \cite{LP,LP1}. In \cite{Luo},  Luo {\it et al.} introduced the relative generalized Hamming weights and gave some new characterizations on the wire-tap channel of type II. 
Liu and Wei gave some necessary conditions and the explicit construction for a
linear code to be optimal on the relative generalized Hamming weight \cite{LW}.

\subsection{The Griesmer bound for the generalized Hamming weights}
For an $[n,k]_q$-linear code $C$,
the {\it Griesmer bound for the generalized Hamming weights} \cite{HK,HK92,HK95} is provided as following:
$$d_r(C)+\sum_{i=1}^{k-r}\lceil \frac{(q-1)d_r(C)}{q^i(q^r-1)}\rceil \leq  n,$$
where $d_r(C)$ is $r$-GHW of $C$ for $1\leq r\leq k$.
When $r=1$, this bound is the Griesmer bound for the minimum Hamming weight.
The Griesmer bound for the relative generalized Hamming weights  is proved in \cite{Z,Z1}.
The parameter $$\delta_r(C)=n-d_r(C)-\sum_{i=1}^{k-r}\lceil \frac{(q-1)d_r(C)}{q^i(q^r-1)}\rceil$$ 
is called the {\it $r$-Griesmer defect} of $C.$ If $r=\min\{1\leq j\leq k\,|\,\delta_j(C)=0\}$ for an $[n,k]_q$-linear code $C$, then $C$ is called a {\it $r$-Griesmer code}.
If  an $[n,k]_q$-linear code $C$ satisfies $|\mathrm{supp}(C)|=n$, then there always exists an integer $r$ such that $C$ is a $r$-Griesmer code.
A $r$-Griesmer code $C$ is optimal, since the $s$-GHW $d_s(C)$ of $C$ reaches the maximal value for $s\ge r$.
For example, the $[24,12,8]_2$-Golay code $C$ with the  weight hierarchy 
$$ \{8,12,14,15, 16, 18, 19, 20, 21, 22, 23, 24]\}$$ given in \cite{TV} satisfies 
$\delta_r(C)=1 \text{ for } 1\leq r\leq 5 \text{ and }\delta_r(C)=0 \text{ for } 6\leq r\leq 12.$
And there is  a $[51,8,24]_2$-cyclic code $C$ with the  weight hierarchy $$\{  24,36,42,45,47,48,50,51 \}$$ in Example 4.5 of \cite{V} such that $\delta_r(C)=1$ for  $ 1\leq r\leq 6 $ and $ \delta_7(C)=\delta_8(C)=0. $

The $1$-Griesmer codes are also known as {\it Griesmer codes}. And Griesmer codes are always distance-optimal.
 Griesmer codes have been studied for many years due to not only their optimality but also
 their geometric applications \cite{D151,D18}. 
We refer the readers to \cite{HLL,HLZ,HXL,LYT,LDT,WZD,WLX,WZY} for works on the construction of Griesmer codes.
An $[n,k]_q$-linear code $C$ with $\delta_1(C)=1$ is also called an {\it almost Griesmer code}.


\subsection{The subcode support weight distributions of linear codes}
Let $C$ be an $[n,k]_q$-linear code,
the sequence $[A_1^{r}(C), A_2^{r}(C),\cdots, A_n^{r}(C)]$ is called the {\it $r$-subcode support weight distribution} ($r$-SSWD) of $C$,
where $$A_j^{r}(C)=|\{ U\,|\,U\text{ is a subspace of $C$ with dimension $r$ and $w(U)=j$} \}|,$$ for $1\leq j\leq n$ and $1\leq r\leq k.$
 Helleseth, Kl{\o}ve and Mykkeltveit \cite{HT} proved an interesting fact that the weight distribution of
the lifted code of a linear code is related to the subcode support weight distributions of this linear code. And this result was also discovered independently by Greene \cite{GC}.

The subcode support weight distributions of a linear code is a generalization of the
weight distribution of a linear code. As the weight distribution of a linear code, the subcode support weight distributions of a linear code are closely related to that of its dual code, and such a relationship was given in \cite{K1,OT}.
In \cite{SL}, Shi, Li and Helleseth determined  the subcode support weight distributions of the projective Solomon-Stiffler codes by using some combinatorial properties of subspaces.
Luo and Liu determined the subcode support weight distributions of some optimal linear  codes  constructed by down-sets in \cite{LL}.

\subsection{Few-weight codes}

The {\it weight distribution} of an $(n,|C|)_q$-code $C$ is defined as
the sequence $$[A_0(C),A_1(C), A_2(C),\cdots, A_n(C)],$$
where $A_j(C)=|\{ \mathbf{c}\in C\,|\, w(\mathbf{c})=j\}|$ for $0\leq j\leq n$.
Note that, if $C$ is a linear code, then $A_j(C)=(q-1)A_j^1(C)$  for $1\leq j\leq n$.
The weight distribution not only contains important information about
the error correcting capability of the code, but also allows
the computation of the error probability of error detection
and correction of this code \cite{K}. 
An $(n,|C|)_q$-code $C$ is said to
be a {\it $t$-weight code} if  $$t=|\{ j  \,|\,A_j(C)\neq 0 \text{ and } 1\leq j\leq n\}|.$$
 Few-weight  codes have applications in secret sharing schemes \cite{A,CA} and authentication codes \cite{D07,DW}.
In \cite{CA1}, Calderbank and Goethals studied $3$-weight codes and association schemes.
The $2$-weight codes have close connection with certain strongly regular graphs \cite{CA2}.
There have been many papers on constructing optimal $t$-weight linear codes, from Boolean functions, difference sets, simplicial complexes, and coding over finite rings, see \cite{D151,D15,HLL,HXL,LM19,M1,M2,MQ}.
The weight distributions of these optimal $t$-weight codes are determined.
And there are some works relating to  few-weight  codes, see \cite{DK,DN,LS,MS,SX,TL,XF}.

\subsection{Our contributions and organization of this paper}
In this paper, we construct several families of distance-optimal few-weight $r$-Griesmer $[n,k]_q$-linear codes for any $1\leq r\leq k$, and their 
subcode support weight distributions are determined in \textbf{Theorem~\ref{xxd2}, Theorem~\ref{xxd}, Theorem~\ref{xxd4} and Theorem~\ref{chen}}. 
The method of our construction is to use
modified Solomon-Stiffler codes and simplex complement codes of GRS codes.
When we determine the subcode support weight distributions, the technique we used is to enumerate the number of subspaces satisfying the certain conditions.

The rest of the paper is organized as follows: Section 2 gives some preliminaries and some notations. In Section 3,  we provide a lemma which can be used to enumerate the number of subspaces satisfying certain conditions.
By this lemma, we determine the subcode support weight distributions of a family of Griesmer codes and a family of distance-optimal $r$-Griesmer codes for $r\ge 2$.
In Section 4,  we construct several infinite families of distance-optimal $r$-Griesmer codes and determine their subcode support weight distributions by another lemma. And some examples are provided.
In Section 5, we use simplex complement codes of GRS codes to construct a family of distance-optimal $2$-Griesmer codes, and their weight distributions and generalized Hamming weights are determined. 
Section 6 concludes this paper and lists some future works.
All computations in this paper were done in MAGMA V2.12-16.

\section{Preliminaries}
This section introduces some notions and basic properties used in this paper.
For an $[n,k]_q$-linear code $C$, the dual code $C^{\perp}$ of $C$ is defined as
$$
C^{\perp}=\{{\bf x}\in \mathbb{F}_q^n \,| \,  {\bf c}\cdot{\bf x} =0, \forall \, {\bf c}\in C\},
$$
where $ ``\cdot" $ is the standard Euclidean inner product.
For nonzero ${\bf y}\in \mathbb{F}_q^n$, the dimension of $\langle {\bf y}\rangle^{\perp}$ is $k-1,$
where $\langle {\bf y}\rangle$ is the linear subspace generated by ${\bf y}.$

When we determine the subcode support weight distributions of linear codes constructed in this paper, we use $q$-ary Gaussian binomial coefficients.

\begin{Definition}\label{KK1w}
For integers $0\leq r\leq k$, the $q$-ary Gaussian binomial coefficients $\dst{k}{r}_q$
are defined by: $$\dst{k}{r}_q=\frac{(q^k-1)(q^k-q)\cdots(q^k-q^{r-1})}{(q^r-1)(q^r-q)\cdots(q^r-q^{r-1})} .$$
\end{Definition}
If $r=0$ or $r=k$, the value $\dst{k}{r}_q$ is one. When $r<0$ or $r>k$, the value $\dst{k}{r}_q$ is defined as zero.
For a vector space $U$ over $\mathbb{F}_q$ with dimension $k$, assume ${\rm SUB}(U)=\{V\,|\, V\text{ is a nonzero subspace of }  U \}$ and
 $${\rm SUB}^{r}(U)=\{V\,|\,V\text{ is a subspace of $U$ with dimension }r\}$$ for $1\leq r\leq k.$
 Then $|{\rm SUB}^{r}(U)|=\dst{k}{r}_q  \,\,\text{
and }\,\,{\rm SUB}(U)=  \bigcup_{r=1}^{k} {\rm SUB}^{r}(U).$

 Let $G =[G_{1},G_{2},\cdots,G_{n}]$ be a $k\times n$ matrix,  where $G_i^T\in\mathbb{F}_q^{k}$ for all $1\leq i\leq n$. For any $V\in {\rm SUB}(\mathbb{F}_q^{k})$, the function $m_{G}: {\rm SUB}(\mathbb{F}_q^{k}) \to \mathbb{N}$ is defined as following:
$$
m_{G}(V)=|\{1\leq i\leq n\,|\, G^T_{i} \in V\}|.
$$

\begin{Remark}
Let $G$ and $\tilde{G} $ be matrices of the same column size and the columns of $G$ are contained in those columns of $\tilde{G}$. We denote the matrix obtained by puncturing the columns of $G$ from $\tilde{G}$ by $\tilde{G} \backslash G$. 
And we know that $m_{\tilde{G}}(V)=m_{G}(V)+m_{\tilde{G} \backslash G }(V)$ for any $V\in {\rm SUB}(\mathbb{F}_q^{k}) .$
\end{Remark}

Let $C$ be an $[n,k]_q$-linear code with a generator matrix $G=[G_{1},G_{2},\cdots,G_{n}]$. Then we know that, for any subspace $U\in {\rm SUB}^{r}(C)$, there exists a subspace $V\in {\rm SUB}^{r}(\mathbb{F}_q^k)$ such that $$U=\{ \mathbf{y}G\,|\,  \mathbf{y}\in V\}.$$
The following lemma is easy to be obtained.
\begin{lem} \label{weight}
Assume the notation is as given above. 
For an $[n,k]_q$-linear code $C$ and a subspace $U\in {\rm SUB}^{r}(C)$,
 the subcode support weight of $U$ is $$w(U)= n-m_{G}(V ^{\bot}).$$ 
And the $r$-GHW of $C$ is $d_r(C)=n-\max\{m_G(V)\,|\, V\in {\rm SUB}^{k-r}(\mathbb{F}_q^{k})\}$
for $1\leq r\leq k.$
\end{lem}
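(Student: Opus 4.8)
The plan is to reduce the combinatorial quantity $w(U)$ to counting how many columns of $G$ land in a fixed subspace, and then to dualize. Throughout I would use that, since $G$ has full row rank $k$, the map $V\mapsto VG:=\{\mathbf{y}G\mid \mathbf{y}\in V\}$ is a linear isomorphism $\mathbb{F}_q^k\to C$; in particular it restricts to a bijection ${\rm SUB}^{r}(\mathbb{F}_q^k)\to{\rm SUB}^{r}(C)$, so the subspace $V$ with $U=VG$ is uniquely determined by $U$.

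First I would prove the formula for $w(U)$. Writing $G=[G_1,\dots,G_n]$, the $i$-th coordinate of a codeword $\mathbf{y}G$ equals the inner product $\mathbf{y}\cdot G_i^T$. Hence the $i$-th coordinate vanishes on all of $U=VG$ if and only if $\mathbf{y}\cdot G_i^T=0$ for every $\mathbf{y}\in V$, that is, if and only if $G_i^T\in V^{\perp}$. Taking complements inside $\{1,\dots,n\}$ and recalling the definition of $m_G$,
$$n-w(U)=\big|\{1\leq i\leq n\mid G_i^T\in V^{\perp}\}\big|=m_G(V^{\perp}),$$
which is the first assertion.

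For the second assertion I would combine this with the definition $d_r(C)=\min\{w(U)\mid U\in{\rm SUB}^{r}(C)\}$ and the bijection $V\mapsto VG$ above, giving $d_r(C)=\min_{V\in{\rm SUB}^{r}(\mathbb{F}_q^k)}\big(n-m_G(V^{\perp})\big)$. Since the Euclidean form on $\mathbb{F}_q^k$ is nondegenerate, $V\mapsto V^{\perp}$ is an inclusion-reversing bijection from ${\rm SUB}^{r}(\mathbb{F}_q^k)$ onto ${\rm SUB}^{k-r}(\mathbb{F}_q^k)$; substituting $W=V^{\perp}$ and pulling the minus sign out of the minimum yields $d_r(C)=n-\max\{m_G(W)\mid W\in{\rm SUB}^{k-r}(\mathbb{F}_q^k)\}$. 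The whole argument is routine: the only facts used are that $G$ has full row rank (so $U$ determines $V$) and that orthogonal complementation reverses dimensions and is an involution on the subspace lattice of $\mathbb{F}_q^k$, so I do not anticipate any genuine obstacle here — the lemma is really just a bookkeeping translation that sets up the enumeration arguments used later.
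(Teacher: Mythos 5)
Your proof is correct, and it is exactly the routine verification the paper has in mind: the paper states this lemma without proof (remarking only that it "is easy to be obtained"), and your two steps — identifying the non-support coordinates of $U=VG$ with the columns $G_i^T\in V^{\perp}$, then using that $V\mapsto V^{\perp}$ is a dimension-complementing bijection to convert the minimum over ${\rm SUB}^{r}$ into a maximum over ${\rm SUB}^{k-r}$ — supply precisely the missing bookkeeping.
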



\section{The subcode support weight distributions of $r$-Griesmer codes}
In this section, we determine the subcode support
weight distributions of Griesmer codes ($1$-Griesmer codes) constructed in the first subsection,  
and we determine the subcode support
weight distributions of $r$-Griesmer codes for $r\ge 2$ constructed in the second subsection. 
 Recall 
 $${\rm SUB}^{r}(U)=\{V\,|\,V\text{ is a subspace of $U$ with dimension }r\}$$ for $1\leq r\leq k,$
where $\dim(U)=k$.
Assume integers $u_1,u_2, \cdots, u_s$ and $v_1,v_2, \cdots, v_s$ satisfy
$$0=u_0< u_1< u_2< \cdots< u_s\leq k\,\, \text{ and }\,\, 0=v_0\leq v_1\leq v_2\leq \cdots\leq v_s\leq k$$ 
with $v_i\leq u_i$ for $1\leq i\leq s.$ Let $U_i\in  {\rm SUB}^{u_i}(\mathbb{F}_q^{k})$ such that $$U_1\subseteq U_2\subseteq \cdots\subseteq U_s.$$
Let $N^{u_1, u_2, \cdots, u_s}_{v_1,v_2,\cdots, v_s}$ be the number of 
subspaces $V\in  {\rm SUB}^{v_s}(U_s)$ such that $\dim(U_i \bigcap V)=v_i$
for every $1\leq i\leq s,$ i.e., 
$$N^{u_1, u_2, \cdots, u_s}_{v_1,v_2,\cdots, v_s}
=\{V\in  {\rm SUB}^{v_s}(U_s)\,|\, \dim(U_i \bigcap V)=v_i \text{ for } 1\leq i\leq s\}.$$
 If $s=1,$
then $N^{u_1}_{v_1}=\dst{u_1}{v_1}_q$ by the definition.

In the following lemma, the integer $N^{u_1, u_2, \cdots, u_s}_{v_1,v_2,\cdots, v_s}$ is determined. 
\begin{lem} \label{num}
\begin{description}
  \item[(a)] Fix a subspace $V_{s-1}\in  {\rm SUB}^{v_{s-1}}(U_{s-1}),$
then $$|\{V_s\in  {\rm SUB}^{v_s}(U_s)\,|\,  U_{s-1} \bigcap V_s=V_{s-1}\}|=q^{(u_{s-1}-v_{s-1})(v_s-v_{s-1})}\dst{u_s-u_{s-1}}{v_s-v_{s-1}}_q.$$

\item[(b)]  For $s\ge 1,$ then $$N^{u_1, u_2, \cdots, u_s}_{v_1,v_2,\cdots, v_s}= \prod_{i=1}^sq^{(u_{i-1}-v_{i-1})(v_{i}-v_{i-1})}\dst{u_{i}-u_{i-1}}{v_{i}-v_{i-1}}_q .$$
  
\end{description}

\end{lem}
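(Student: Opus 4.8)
The plan is to prove part (a) first by a direct counting argument, then derive part (b) by induction on $s$, telescoping through the chain $U_1\subseteq U_2\subseteq\cdots\subseteq U_s$.

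For part (a), fix $V_{s-1}\in{\rm SUB}^{v_{s-1}}(U_{s-1})$. I want to count subspaces $V_s\in{\rm SUB}^{v_s}(U_s)$ with $U_{s-1}\cap V_s=V_{s-1}$. First I would pass to the quotient $U_s/V_{s-1}$, which has dimension $u_s-v_{s-1}$, and inside it the image $\overline{U}_{s-1}:=U_{s-1}/V_{s-1}$ has dimension $u_{s-1}-v_{s-1}$. The condition $U_{s-1}\cap V_s=V_{s-1}$ is equivalent to saying that $\overline{V}_s:=V_s/V_{s-1}$ is a subspace of $U_s/V_{s-1}$ of dimension $v_s-v_{s-1}$ meeting $\overline{U}_{s-1}$ only in $\{0\}$. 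So the count reduces to: the number of $(v_s-v_{s-1})$-dimensional subspaces of a space of dimension $u_s-v_{s-1}$ that are disjoint (trivially intersecting) from a fixed subspace of dimension $u_{s-1}-v_{s-1}$. This is a standard Gaussian-binomial computation: such a subspace is determined by choosing a complement-style embedding, and the number of $t$-dimensional subspaces of an $m$-dimensional space disjoint from a fixed $\ell$-dimensional subspace (with $\ell+t\le m$) is $q^{\ell t}\dst{m-\ell}{t}_q$. Substituting $m=u_s-v_{s-1}$, $\ell=u_{s-1}-v_{s-1}$, $t=v_s-v_{s-1}$ gives $q^{(u_{s-1}-v_{s-1})(v_s-v_{s-1})}\dst{u_s-u_{s-1}}{v_s-v_{s-1}}_q$, as claimed. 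I would prove the auxiliary disjoint-subspace count by picking an ordered basis: each new basis vector must avoid the span of the fixed $\ell$-space together with the previously chosen vectors, contributing factors $(q^m-q^{\ell+j})$ for $j=0,\dots,t-1$, then dividing by the number of ordered bases $(q^t-1)\cdots(q^t-q^{t-1})$ of a $t$-space; the ratio simplifies to $q^{\ell t}\dst{m-\ell}{t}_q$.

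For part (b), I would induct on $s$. The base case $s=1$ is $N^{u_1}_{v_1}=\dst{u_1}{v_1}_q$, which holds by definition (and matches the product formula since $u_0=v_0=0$). For the inductive step, partition the set counted by $N^{u_1,\dots,u_s}_{v_1,\dots,v_s}$ according to the subspace $V_{s-1}:=U_{s-1}\cap V_s$. If $V_s\in{\rm SUB}^{v_s}(U_s)$ satisfies $\dim(U_i\cap V_s)=v_i$ for all $i\le s$, then in particular $V_{s-1}:=U_{s-1}\cap V_s$ has dimension $v_{s-1}$, and moreover $U_i\cap V_{s-1}=U_i\cap(U_{s-1}\cap V_s)=U_i\cap V_s$ (using $U_i\subseteq U_{s-1}$ for $i\le s-1$), so $\dim(U_i\cap V_{s-1})=v_i$ for $1\le i\le s-1$; that is, $V_{s-1}$ is counted by $N^{u_1,\dots,u_{s-1}}_{v_1,\dots,v_{s-1}}$. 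Conversely, for each such $V_{s-1}$, part (a) counts the number of $V_s\in{\rm SUB}^{v_s}(U_s)$ with $U_{s-1}\cap V_s=V_{s-1}$ as $q^{(u_{s-1}-v_{s-1})(v_s-v_{s-1})}\dst{u_s-u_{s-1}}{v_s-v_{s-1}}_q$, and every such $V_s$ automatically satisfies $\dim(U_i\cap V_s)=\dim(U_i\cap V_{s-1})=v_i$ for $i\le s-1$ (since $U_i\cap V_s\subseteq U_{s-1}\cap V_s=V_{s-1}$ gives $U_i\cap V_s=U_i\cap V_{s-1}$). Hence the fiber size depends only on the dimensions, not on the choice of $V_{s-1}$, and
\[
N^{u_1,\dots,u_s}_{v_1,\dots,v_s}=N^{u_1,\dots,u_{s-1}}_{v_1,\dots,v_{s-1}}\cdot q^{(u_{s-1}-v_{s-1})(v_s-v_{s-1})}\dst{u_s-u_{s-1}}{v_s-v_{s-1}}_q.
\]
Applying the induction hypothesis to the first factor yields the product formula.

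The main obstacle is part (a): getting the disjoint-subspace count exactly right, in particular tracking the exponent of $q$ and making sure the quotient reduction correctly translates "$U_{s-1}\cap V_s=V_{s-1}$" into "$\overline{V}_s$ is disjoint from $\overline{U}_{s-1}$" (one must check both that the intersection is no larger — which is the disjointness — and no smaller — which is automatic since $V_{s-1}\subseteq V_s$ and $V_{s-1}\subseteq U_{s-1}$). Once part (a) is pinned down, the inductive bookkeeping in part (b) is routine, the only subtlety being the repeated use of the chain condition $U_i\subseteq U_{s-1}$ to show that conditioning on $V_{s-1}=U_{s-1}\cap V_s$ already forces all the lower intersection dimensions.
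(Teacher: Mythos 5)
Your proposal is correct. Part (b) is essentially the paper's own argument: induction on $s$, fibering over $V_{s-1}=U_{s-1}\cap V_s$ and using the chain condition $U_i\subseteq U_{s-1}$ to see the lower intersection constraints are automatic on each fiber. The only difference is in part (a): the paper simply cites Lemma 2 of \cite{LL}, whereas you give a self-contained proof via the quotient $U_s/V_{s-1}$ and the standard count $q^{\ell t}\dst{m-\ell}{t}_q$ of $t$-dimensional subspaces meeting a fixed $\ell$-dimensional subspace trivially; this is valid (and even covers the degenerate cases, since $\dst{m-\ell}{t}_q=0$ when $t>m-\ell$), it just makes the lemma independent of the external reference.
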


\begin{proof}
{\bf (a)} It is Lemma 2 of \cite{LL}.

{\bf (b)}  Assume $u_0=v_0=0.$
If $s=1$, then $$N^{u_1}_{v_1}=\dst{u_1}{v_1}_q=q^{(u_0-v_0)(v_1-v_0)}\dst{u_1-u_0}{v_1-u_0}_q.$$
Hence we assume $s\ge 2$ and work by induction on $s$.

Note that $U_1\subseteq U_2\subseteq \cdots\subseteq U_s$
and $U_{s-1}\bigcap V=U_{s-1}\bigcap( U_{s}\bigcap V).$
By Statement {\bf (a)} and the induction,  we have that 
\begin{eqnarray*}
N^{u_1, u_2, \cdots, u_s}_{v_1,v_2,\cdots, v_s}
&=&|\{V\in  {\rm SUB}^{v_s}(U_s)\,|\, \dim(U_i \bigcap V)=v_i \text{ for } 1\leq i\leq s\}|\\
&=&\sum_{V_{s-1}\in \{V\in  {\rm SUB}^{v_{s-1}}(U_{s-1})\,|\, \dim(U_i \bigcap V)=v_i \text{ for } 1\leq i\leq s-1\} }|\{V\in  {\rm SUB}^{v_s}(U_s)\,|\,  U_{s-1} \bigcap V=V_{s-1}\}|\\
&=&\sum_{V_{s-1}\in \{V\in  {\rm SUB}^{v_{s-1}}(U_{s-1})\,|\, \dim(U_i \bigcap V)=v_i \text{ for } 1\leq i\leq s-1\} }q^{(u_{s-1}-v_{s-1})(v_{s}-v_{s-1})}\dst{u_{s}-u_{s-1}}{v_{s}-v_{s-1}}_q\\
&=&q^{(u_{s-1}-v_{s-1})(v_{s}-v_{s-1})}\dst{u_{s}-u_{s-1}}{v_{s}-v_{s-1}}_qN^{u_1, u_2, \cdots, u_{s-1}}_{v_1,v_2,\cdots, v_{s-1}}\\
&=&q^{(u_{s-1}-v_{s-1})(v_{s}-v_{s-1})}\dst{u_{s}-u_{s-1}}{v_{s}-v_{s-1}}_q
\prod_{i=1}^{s-1}q^{(u_{i-1}-v_{i-1})(v_{i}-v_{i-1})}\dst{u_{i}-u_{i-1}}{v_{i}-v_{i-1}}_q\\
&=&\prod_{i=1}^sq^{(u_{i-1}-v_{i-1})(v_{i}-v_{i-1})}\dst{u_{i}-u_{i-1}}{v_{i}-v_{i-1}}_q .
\end{eqnarray*}
\end{proof}

\begin{Remark}
If $s=2,$ then $N^{u_1, u_2}_{v_1,v_2}=q^{(u_1-v_1)(v_2-v_1)}\dst{u_2-u_1}{v_2-v_1}_q\dst{u_1}{v_1}_q
,$ which is proved in Lemma 3.1 of \cite{SL}.
\end{Remark}

For $k \ge  2$, let $S_{q,k}$ be a $k \times  \frac{q^k-1}{q-1}$ matrix over $\mathbb{F}_q$ such that every two columns of $S_{q,k}$ are linearly independent.
This matrix generates the simplex code, which is a $[\frac{q^k-1}{q-1}, k, q^{k-1}]_q$-linear code.


\subsection{The subcode support weight distributions of Griesmer codes }

In the following theorem, we construct an infinite family of Griesmer codes, and we determine their subcode support weight distributions.
\begin{Theorem}\label{xxd2}
Assume the notation is as given above. For integers 
$$0=u_0< u_1< u_2< \cdots< u_s<u_{s+1}=k$$ and $1\leq s\leq t,$
 there exists  an $[n,k,d]_q$-linear code $C$ with  $$n=t\frac{q^{k}-1}{q-1}-\sum_{i=1}^s\frac{q^{u_i}-1}{q-1} \text{ and } d=tq^{k-1}-\sum_{i=1}^sq^{u_i-1}.$$
 
 For $1\leq r\leq k$, the linear code C satisfies the following properties:
 \begin{description}
   \item[(a)] The weight distribution of the linear code $C$ is completely determined in the following table:
$$\begin{array}{c|c}
                                         \text{weight} & \text{multiplicity} \\
                                           \hline
                                         0 & 1 \\
                                    d& q^{k} -q^{k-u_1}\\
                                          d+q^{u_1-1}& q^{k-u_1}-q^{k-u_2}\\
                                          d+q^{u_1-1}+q^{u_2-1}& q^{k-u_2}-q^{k-u_3}\\
                                          \vdots&\vdots\\
                                          d+\sum_{i=1}^{s-1}q^{u_i-1}&q^{k-u_{s-1}}-q^{k-u_s}\\
                                          d+\sum_{i=1}^sq^{u_i-1}&q^{k-u_s}-1\\
                                          \hline
                                        \end{array}  .$$

  \item[(b)]   The linear code $C$ is a Griesmer code.
 
\item[(c)] The $r$-GHW  of $C$ satisfies $$d_{r}(C)=t\frac{q^{k}-q^{k-r}}{q-1}-\sum_{i=1}^s\frac{q^{u_i}-1}{q-1}+
\sum_{i=j_r+1}^s \frac{q^{ u_i-r}-1}{q-1}  ,$$
where $j_r=\max\{ 0\leq i\leq s\,|\,u_i\leq r\}.$

\item[(d)]  Assume $v_0=0$ and $ v_{s+1}=k-r$. 
The $r$-SSWD of $C$ is
$$A_j^{r}(C) =\sum_{( v_1,v_2,\cdots, v_s,v_{s+1})\in \mathbf{Z}(q, m_{r,j}) }  \prod_{i=1}^{s+1}q^{(u_{i-1}-v_{i-1})(v_{i}-v_{i-1})}\dst{u_{i}-u_{i-1}}{v_{i}-v_{i-1}}_q.$$
where $m_{r,j}=j-t\frac{q^{k}-q^{k-r}}{q-1}+\sum_{i=1}^s\frac{q^{u_i}-1}{q-1}$ and 
 { \small $$ \mathbf{Z}(q, m_{r,j}) =\Biggm{\{ }( v_1,\cdots, v_s,k-r)\in \mathbb{Z}^{s+1}\,\Biggm{|}\, 
\left\{
  \begin{array}{ll}
    0\leq v_i-v_{i-1}\leq u_i-u_{i-1}, & \hbox{$1\leq i\leq s$;} \\
    \max\{0,\,u_i-r\}\leq v_i\leq \min\{u_i,\,k-r\}, & \hbox{$1\leq i\leq s$;} \\
    m_{r,j}=\sum_{i=1}^{s} \frac{q^{v_i}-1}{q-1}.& \hbox{}
  \end{array}
\right. \Biggm{\}}  .$$}
 \end{description}
 \end{Theorem}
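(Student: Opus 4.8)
The plan is to realize $C$ explicitly as a modified Solomon-Stiffler code, i.e. to produce a generator matrix $G$ by taking $t$ copies of the simplex generator matrix $S_{q,k}$ and then puncturing, for each $1\le i\le s$, one copy of a simplex code $S_{q,u_i}$ embedded on a chain of subspaces. Concretely, fix a flag $W_1\subsetneq W_2\subsetneq\cdots\subsetneq W_s\subsetneq \mathbb{F}_q^k$ with $\dim W_i=u_i$; let $\tilde G$ be the concatenation of $t$ copies of $S_{q,k}$, and let $G=\tilde G\backslash (S_1\,|\,S_2\,|\,\cdots\,|\,S_s)$ where $S_i$ generates the simplex code supported on $W_i$ (columns chosen among the columns of one of the copies of $S_{q,k}$). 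Then $n=t\frac{q^k-1}{q-1}-\sum_{i=1}^s\frac{q^{u_i}-1}{q-1}$ and $d=tq^{k-1}-\sum_{i=1}^s q^{u_i-1}$ by direct counting, and the function $m_G$ satisfies, for $V\in{\rm SUB}(\mathbb{F}_q^k)$,
$$m_G(V)=t\,\frac{q^{\dim V}-1}{q-1}-\sum_{i=1}^s\frac{q^{\dim(V\cap W_i)}-1}{q-1},$$
using the Remark that $m$ is additive under puncturing together with the standard fact that the simplex matrix $S_{q,m}$ on a space $W$ has $m_{S_{q,m}}(V)=\frac{q^{\dim(V\cap W)}-1}{q-1}$.

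With this formula in hand, part \textbf{(a)} follows by specializing to $\dim V=1$: a $1$-dimensional $\langle\mathbf y\rangle$ meets $W_i$ in dimension $1$ precisely when $\mathbf y\in W_i$, and the nested structure $W_1\subsetneq\cdots\subsetneq W_s$ partitions the nonzero vectors into the ``shells'' $W_{u_i}\setminus W_{u_{i-1}}$; counting how many one-dimensional subspaces lie in each shell gives the multiplicities $q^{k-u_{i-1}}-q^{k-u_i}$, and for each such $\mathbf y$ Lemma~\ref{weight} gives $w(\mathbf y)=n-m_G(\langle \mathbf y\rangle^\perp)$. Expanding $m_G(\langle\mathbf y\rangle^\perp)$ via $\dim(\langle\mathbf y\rangle^\perp\cap W_i)=u_i$ or $u_i-1$ according to whether $\mathbf y\perp W_i$ yields exactly the weights in the table; then $A_j(C)=(q-1)A_j^1(C)$ converts subspace counts to codeword counts. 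Part \textbf{(b)} is the statement that $d=d_1(C)$ meets the Griesmer bound with $\delta_1(C)=0$: one checks $d+\sum_{i=1}^{k-1}\lceil d/q^i\rceil=n$ by carrying out the ceiling sums termwise over the geometric series — this is the ``known'' Solomon-Stiffler/Belov-type computation and is routine given $d=tq^{k-1}-\sum q^{u_i-1}$. For part \textbf{(c)}, by the second half of Lemma~\ref{weight} we need $\max\{m_G(V):V\in{\rm SUB}^{k-r}(\mathbb{F}_q^k)\}$; feeding $\dim V=k-r$ into the $m_G$ formula, we must \emph{minimize} $\sum_{i=1}^s\frac{q^{\dim(V\cap W_i)}-1}{q-1}$ subject to the constraints $\max\{0,u_i-r\}\le\dim(V\cap W_i)\le\min\{u_i,k-r\}$ and monotonicity in $i$; the minimum is attained by making $\dim(V\cap W_i)$ as small as the constraints allow, namely $0$ for $u_i\le r$ (i.e. $i\le j_r$) and $u_i-r$ for $i>j_r$, which one realizes by choosing $V$ in sufficiently general position relative to the flag. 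Substituting back gives the stated closed form for $d_r(C)$.

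Part \textbf{(d)} is the heart of the matter and the main obstacle. By Lemma~\ref{weight}, for $U\in{\rm SUB}^r(C)$ corresponding to $V\in{\rm SUB}^r(\mathbb{F}_q^k)$ we have $w(U)=n-m_G(V^\perp)$, and $V\mapsto V^\perp$ is a bijection ${\rm SUB}^r(\mathbb{F}_q^k)\to{\rm SUB}^{k-r}(\mathbb{F}_q^k)$. So $A_j^r(C)$ equals the number of $(k-r)$-dimensional subspaces $V'$ with $m_G(V')=n-j$, and by the $m_G$ formula this depends on $V'$ only through the tuple $(\dim(V'\cap W_1),\dots,\dim(V'\cap W_s))=:(v_1,\dots,v_s)$; the condition $m_G(V')=n-j$ rewrites, after algebra, as $\sum_{i=1}^s\frac{q^{v_i}-1}{q-1}=m_{r,j}$ with $m_{r,j}$ as defined. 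The admissible tuples are exactly those in $\mathbf Z(q,m_{r,j})$ — the constraints $0\le v_i-v_{i-1}\le u_i-u_{i-1}$ come from $W_{i-1}\subseteq W_i$ and the dimension formula, and $\max\{0,u_i-r\}\le v_i\le\min\{u_i,k-r\}$ from $\dim V'=k-r$ inside $\mathbb{F}_q^k$ with $\dim W_i=u_i$. Finally, for each fixed admissible tuple $(v_1,\dots,v_s)$ the number of $V'\in{\rm SUB}^{k-r}(\mathbb{F}_q^k)$ realizing it is precisely $N^{u_1,\dots,u_s,k}_{v_1,\dots,v_s,k-r}$ in the notation set up before Lemma~\ref{num} (take $U_i=W_i$ and $U_{s+1}=\mathbb{F}_q^k$, $v_{s+1}=k-r$, so that $\dim(U_{s+1}\cap V')=\dim V'=k-r$ is automatic), and Lemma~\ref{num}(b) evaluates this as $\prod_{i=1}^{s+1}q^{(u_{i-1}-v_{i-1})(v_i-v_{i-1})}\dst{u_i-u_{i-1}}{v_i-v_{i-1}}_q$. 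Summing over $\mathbf Z(q,m_{r,j})$ gives the claimed formula. The delicate points to get right are: verifying that $m_G(V')$ genuinely depends only on the intersection-dimension tuple (it does, by the formula, but one should note two different flags-positions with the same tuple are counted correctly by $N$), checking that the index set $\mathbf Z(q,m_{r,j})$ captures \emph{all and only} the realizable tuples (no hidden incompatibility among the constraints), and confirming the bookkeeping that turns $w(U)=n-m_G(V^\perp)$ and the definition of $m_{r,j}$ into the single Diophantine equation $\sum\frac{q^{v_i}-1}{q-1}=m_{r,j}$.
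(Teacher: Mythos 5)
Your proposal is correct and takes essentially the same route as the paper: the same flag-based modified Solomon--Stiffler generator matrix built from $t$ copies of $S_{q,k}$, the same identity $m_G(V)=t\frac{q^{\dim V}-1}{q-1}-\sum_{i=1}^s\frac{q^{\dim(V\cap W_i)}-1}{q-1}$ combined with Lemma~\ref{weight}, the same routine Griesmer ceiling computation for (b), the same explicit/general-position minimizer for (c), and the same reduction of $A_j^r(C)$ to counting $(k-r)$-dimensional subspaces with prescribed intersection dimensions evaluated by Lemma~\ref{num}(b). The only blemish is a harmless slip in (a): the nonzero vectors should be partitioned by the dual shells $W_{i-1}^{\perp}\setminus W_i^{\perp}$ (i.e.\ by the threshold index for $\mathbf{y}\perp W_i$), not by $W_i\setminus W_{i-1}$ --- which is in fact what your stated multiplicities $q^{k-u_{i-1}}-q^{k-u_i}$ and your later ``according to whether $\mathbf{y}\perp W_i$'' already reflect.
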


\begin{proof}
Let $\mathbf{e}_i\in \mathbb{F}_q^k$ be the vector with all $0$s except for a $1$ in the $i$th coordinate.
Assume $U_i$ is the linear subspace generated by $\{\mathbf{e}_{1},\,\mathbf{e}_{2},\cdots ,\mathbf{e}_{u_i}\}$ for $1\leq i\leq s+1.$
Let $G_i$ be the $k \times \frac{q^{u_i}-1}{q-1}$ submatrix of $S_{q,k}$ such that each column of $G_i$ is in $U_i$.  
 Let $C$ be the $[n,k]_q$-linear code with the generator matrix 
\begin{equation}\label{de}
  G=[S_{q,k}\backslash G_1, S_{q,k}\backslash G_2,\cdots,S_{q,k}\backslash G_s,\underbrace{S_{q,k},\cdots,S_{q,k}}_{t-s}],
\end{equation}
where $n=t\frac{q^{k}-1}{q-1}-\sum_{i=1}^s\frac{q^{u_i}-1}{q-1}$.

For any subspace $U\in {\rm SUB}^{r}(C)$, there exists a subspace $V\in {\rm SUB}^{r}( \mathbb{F}_q^k)$ such that $U=\{ \mathbf{y}G\,|\,  \mathbf{y}\in V\}.$
By Lemma~\ref{weight}, we know that the subcode support weight of $U$ is 
\begin{equation}\label{uu31}
 w(U)= n-m_{G}( V^{\bot}),
\end{equation}
where $\dim(V^{\bot})=k-r.$
Assume $\tilde{G}$ is the $k\times t\frac{q^{k}-1}{q-1}$ matrix $[\underbrace{S_{q,k}, S_{q,k},\cdots,S_{q,k}}_t].$
By the definition of the function $m_{G}$, we have that
\begin{equation}\label{oor31}
  m_{\tilde{G}}(V^{\bot})=m_{G}(V^{\bot})+\sum_{i=1}^s m_{G_i}(V^{\bot})  .
\end{equation}

By the definition of $S_{q,k}$,
we know that $m_{\tilde{G}}(V^{\bot})=t\frac{q^{k-r}-1}{q-1} .$
By Equalities~(\ref{uu31}) and (\ref{oor31}), we have that
\begin{equation}\label{oo31}
  w(U)=t\frac{q^{k}-q^{k-r}}{q-1}-\sum_{i=1}^s\frac{q^{u_i}-1}{q-1}+\sum_{i=1}^s m_{G_i}(V^{\bot}).
\end{equation}

{\bf(a)}
Suppose $r=1.$
For any nonzero codeword ${\bf c}\in C$, there exists a unique nonzero vector ${\bf y}\in \mathbb{F}_q^k$ such that ${\bf c}={\bf y}G$.
By Equality~(\ref{oo31}), we know that the Hamming weight of $\mathbf{c}$ is 
\begin{equation}\label{oo1e}
  w(\mathbf{c})=tq^{k-1}-\sum_{i=1}^s \frac{q^{u_i}-1}{q-1}+\sum_{i=1}^s m_{G_i}(\langle \mathbf{y}\rangle^{\bot}).
\end{equation}

Since $\dim(U_i)=u_i$ and $\dim(U_i\bigcap \langle \mathbf{y}\rangle^{\bot})\in \{u_i-1,\,u_i\}$ for $1\leq i\leq s$,
we get that, for $1\leq i\leq s,$
 $$m_{G_i}(\langle \mathbf{y}\rangle^{\bot})\in \{\frac{q^{u_i-1}-1}{q-1},\frac{q^{u_i}-1}{q-1}\}.$$

Note that $$\{\mathbf{0}\}=U_0\subseteq U_1\subseteq U_2\subseteq \cdots\subseteq U_s\subseteq U_{s+1}=\mathbb{F}_q^k.$$

For any nonzero vector $\mathbf{y}\in \mathbb{F}_q^k$, there exists a unique integer $j_{\mathbf{y}}$ such that for $0\leq j_{\mathbf{y}}\leq s$ such that  $U_{ j_{\mathbf{y}}} \subseteq \langle \mathbf{y}\rangle^{\bot}$ and $U_{ j_{\mathbf{y}}+1} \nsubseteq\langle \mathbf{y}\rangle^{\bot}$.
Then $\sum_{i=1}^s m_{G_i}(\langle \mathbf{y}\rangle^{\bot})=\sum_{i=1}^{ j_{\mathbf{y}}} \frac{q^{u_i}-1}{q-1}+\sum_{i= j_{\mathbf{y}}+1}^s\frac{q^{u_i-1}-1}{q-1}.$
And we know that, for $0\leq j\leq s$, 
 \begin{eqnarray*}
&&   |\{{\bf y}\in \mathbb{F}_q^k\,|\,  \mathbf{y}\neq \mathbf{0},\,\, U_j \subseteq \langle \mathbf{y}\rangle^{\bot}\text{ and }U_{j+1} \nsubseteq \langle \mathbf{y}\rangle^{\bot}\}|\\
&=&   |\{{\bf y}\in \mathbb{F}_q^k\,|\,  \mathbf{y}\neq \mathbf{0}\text{ and } U_j \subseteq \langle \mathbf{y}\rangle^{\bot}\}|- |\{{\bf y}\in \mathbb{F}_q^k\,|\,  \mathbf{y}\neq \mathbf{0}\text{ and } U_{j+1} \subseteq \langle \mathbf{y}\rangle^{\bot}\}|\\
                          &=& q^{k-u_j}-q^{k-u_{j+1}}.
                        \end{eqnarray*}

By Equality~(\ref{oo1e}), we know that the minimum Hamming distance of $C$ is $$d=tq^{k-1}-\sum_{i=1}^s \frac{q^{u_i}-1}{q-1}+\sum_{i=1}^s\frac{q^{u_i-1}-1}{q-1}
=tq^{k-1}-\sum_{i=1}^s q^{u_i-1}.$$
And the weight distribution of $C$  is completely determined in the following table:
$$\begin{array}{c|c}
                                         \text{weight} & \text{multiplicity} \\
                                           \hline
                                         0 & 1 \\
                                    d& q^{k} -q^{k-u_1}\\
                                          d+q^{u_1-1}& q^{k-u_1}-q^{k-u_2}\\
                                    d+q^{u_1-1}+q^{u_2-1}& q^{k-u_2}-q^{k-u_3}\\
                                          \vdots&\vdots\\
                                          d+\sum_{i=1}^{s-1}q^{u_i-1}&q^{k-u_{s-1}}-q^{k-u_s}\\
                                          d+\sum_{i=1}^sq^{u_i-1}&q^{k-u_s}-1\\
                                          \hline
                                        \end{array}  .$$

{\bf(b)} Note that the minimum Hamming distance of $C$ is $d=tq^{k-1}-( q^{u_1-1}+q^{u_2-1}+\cdots +
q^{u_s-1}).$
Since 
\begin{eqnarray*}
  \lceil \frac{tq^{k-1}-( q^{u_1-1}+q^{u_2-1}+\cdots +
q^{u_s-1})}{q^j}\rceil &=&   \lceil tq^{k-1-i}-( q^{u_1-1-j}+q^{u_2-1-j}+\cdots +
q^{u_s-1-j})\rceil\\
   &=& tq^{k-1-i}-( q^{u_{s_j+1}-1-j}+q^{u_{s_j+2}-1-j}+\cdots +
q^{u_{s}-1-j}) ,
\end{eqnarray*}
where $u_0=0$ and  $s_j=\max\{ 0\leq i\leq s\,|\,u_i-1< j\},$
we have that  $$\sum_{j=1}^{k-1}\lceil \frac{d}{q^j}\rceil= t\frac{q^{k}-1}{q-1}-\sum_{i=1}^s\frac{q^{u_i}-1}{q-1}=n.$$
Hence $C$ is a Griesmer code.

{\bf(c)}  Assume $1\leq r\leq k$. By Equality~(\ref{oo31}), the $r$-GHW of $C$ is
\begin{eqnarray*}
   d_{r}(C)&=& \min\{w(U)\, |\, U\in {\rm SUB}^{r}(C)\} \\
   &=&  t\frac{q^{k}-q^{k-r}}{q-1}-\sum_{i=1}^s\frac{q^{u_i}-1}{q-1}+
\min\{\sum_{i=1}^s m_{G_i}(V^{\bot}) \,| V\in {\rm SUB}^{r}(\mathbb{F}_q^{k})\,\}\\
&=&  t\frac{q^{k}-q^{k-r}}{q-1}-\sum_{i=1}^s\frac{q^{u_i}-1}{q-1}+
\min\{\sum_{i=1}^s m_{G_i}(V) \,| V\in {\rm SUB}^{k-r}(\mathbb{F}_q^{k})\,\}.
\end{eqnarray*}

Then we determine $\min\{\sum_{i=1}^s m_{G_i}(V) \,| V\in {\rm SUB}^{k-r}(\mathbb{F}_q^{k})\,\}.$ 
Assume $\dim(V)=k-r$.
Since  $\dim(U_i)=u_i$ and $U_i+V\subseteq \mathbb{F}_q^k,$
we know that $$\max\{0,\,u_i-r\}\leq \dim(U_i\bigcap V)\leq \min\{u_i,\,k-r\}.$$


Let   $j_r=\max\{ 0\leq i\leq s\,|\,u_i\leq r\}.$
Since $m_{G_i}(V)=\frac{q^{ \dim(U_i\bigcap V)}-1}{q-1}$,
we have that $$\min\{\sum_{i=1}^s m_{G_i}(V) \,| V\in {\rm SUB}^{k-r}(\mathbb{F}_q^{k})\}
\ge \sum_{i=j_r+1}^s \frac{q^{ u_i-r}-1}{q-1}.$$

Let $V_{k-r}$ be the linear subspace generated by $\{\mathbf{e}_{r+1},\,\mathbf{e}_{r+2},\cdots,\mathbf{e}_{k}\}.$
Note that $$\dim(U_i\bigcap V_{k-r})= 0  $$ for $1\leq i\leq j_r$
 and $\dim(U_i\bigcap V_{k-r})= u_i-r  $ for $j_r+1\leq i\leq s.$
 Hence $$m_{G_i}(V_{k-r})= 0$$
 for $1\leq i\leq j_r$ and  $m_{G_i}(V_{k-r})= q^{u_i-r}-1$ for $j_r+1\leq i\leq s.$

Hence $\min\{\sum_{i=1}^s m_{G_i}(V) \,| V\in {\rm SUB}^{k-r}(\mathbb{F}_q^{k})\}
= \sum_{i=j_r+1}^s \frac{q^{ u_i-r}-1}{q-1}$
and $$ d_{r}(C)=t\frac{q^{k}-q^{k-r}}{q-1}-\sum_{i=1}^s\frac{q^{u_i}-1}{q-1}+
\sum_{i=j_r+1}^s \frac{q^{ u_i-r}-1}{q-1}  .$$

{\bf(d)}
Note that $A_j^{r}(C)=0$ for $1\leq j<d_r(C).$
Then we assume $d_r(C)\leq j\leq n.$
 We use Equality~(\ref{oo31}), i.e.,
\begin{equation*}
  w(U)=t\frac{q^{k}-q^{k-r}}{q-1}-\sum_{i=1}^s\frac{q^{u_i}-1}{q-1}+\sum_{i=1}^s m_{G_i}(V^{\bot})
\end{equation*}
to determine the $r$-SSWD of $C$.
Let $m_{r,j}=j-t\frac{q^{k}-q^{k-r}}{q-1}+\sum_{i=1}^s\frac{q^{u_i}-1}{q-1}.$
Then
\begin{eqnarray*}
  A_j^{r}(C) &=& |\{ U\in {\rm SUB}^{r}(C)\,|\, w(U)=j\}| \\
   &=& |\{ V\in {\rm SUB}^{r}(\mathbb{F}_q^k)\,|\,     m_{r,j}=\sum_{i=1}^s m_{G_i}(V^{\bot})\}|\\
&=& |\{ V\in {\rm SUB}^{k-r}(\mathbb{F}_q^k)\,|\,     m_{r,j}=\sum_{i=1}^s m_{G_i}(V)\}|\\
&=& |\{ V\in {\rm SUB}^{k-r}(\mathbb{F}_q^k)\,|\,     m_{r,j}=\sum_{i=1}^s   \frac{q^{ \dim(U_i\bigcap V)}-1}{q-1}\}|\\
&=& \sum_{( v_1,v_2,\cdots, v_s,k-r)\in \mathbf{Z}(q, m_{r,j})}  |\{ V\in {\rm SUB}^{k-r}(\mathbb{F}_q^k)\,|\,  \dim(U_i\bigcap V)=v_i \text{ for }  1\leq i\leq s \}|\\
&=&\sum_{( v_1,v_2,\cdots, v_s,k-r)\in \mathbf{Z}(q, m_{r,j})} N^{u_1, u_2, \cdots, u_s,k}_{v_1,v_2,\cdots, v_s,k-r} ,
\end{eqnarray*}

where  { \small $$ \mathbf{Z}(q, m_{r,j}) = \Biggm{\{}( v_1,\cdots, v_s,k-r)\in \mathbb{Z}^{s+1}\,\Biggm{|}\, 
\left\{
  \begin{array}{ll}
    0\leq v_i-v_{i-1}\leq u_i-u_{i-1}, & \hbox{$1\leq i\leq s$;} \\
    \max\{0,\,u_i-r\}\leq v_i\leq \min\{u_i,\,k-r\}, & \hbox{$1\leq i\leq s$;} \\
    m_{r,j}=\sum_{i=1}^{s} \frac{q^{v_i}-1}{q-1}.& \hbox{}
  \end{array}
\right.
  \Biggm{\}} .$$}

Assume $v_0=0$ and  $ v_{s+1}=k-r.$ By Lemma~\ref{num}, we have that
$$A_j^{r}(C) =\sum_{( v_1,v_2,\cdots, v_s,v_{s+1})\in \mathbf{Z}(q, m_{r,j}) }  \prod_{i=1}^{s+1}q^{(u_{i-1}-v_{i-1})(v_{i}-v_{i-1})}\dst{u_{i}-u_{i-1}}{v_{i}-v_{i-1}}_q.$$
\end{proof}

\begin{Remark}
When $1\leq s\leq q-1,$ the size of the set $\mathbf{Z}(q, m_{r,j})$ in Theorem~\ref{xxd2} is less than two by using the condition $m_{r,j}=\sum_{i=1}^{s} \frac{q^{v_i}-1}{q-1}$ and the uniqueness of the
 $q$-adic expansion.
\end{Remark}

\begin{Corollary}\label{2a}
Assume the notation is as given above and $s=1$.
For an integer $r$ with $1\leq r\leq k$,
then 
$$
    |\{j\,|\, A_j^{r}(C) \neq 0\}| \leq \min\{ u_1+1,\,r+1,\,k-r+1,\,k-u_1+1\} .
$$

\end{Corollary}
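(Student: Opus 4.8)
The plan is to specialize Theorem~\ref{xxd2}(d) to $s=1$ and simply count for how many $j$ the resulting formula is nonzero. With $s=1$ the fixed data are $u_0=v_0=0$, $u_1=u_1$, $u_2=k$ and $v_2=k-r$, so Theorem~\ref{xxd2}(d) reads
\[
A_j^{r}(C)=\sum_{(v_1,k-r)\in\mathbf{Z}(q,m_{r,j})}\dst{u_1}{v_1}_q\,q^{(u_1-v_1)(k-r-v_1)}\dst{k-u_1}{k-r-v_1}_q ,
\]
where $\mathbf{Z}(q,m_{r,j})$ consists of the tuples $(v_1,k-r)$ with $\max\{0,u_1-r\}\le v_1\le\min\{u_1,k-r\}$ (the condition $0\le v_1\le u_1$ being implied by this) and $m_{r,j}=\tfrac{q^{v_1}-1}{q-1}$. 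First I would note that $v_1\mapsto\tfrac{q^{v_1}-1}{q-1}$ is strictly increasing on $\mathbb{Z}_{\ge0}$, so the last equation determines $v_1$ uniquely; hence $\mathbf{Z}(q,m_{r,j})$ has at most one element and the sum has at most one term.

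Next I would check that this single term is positive exactly when $v_1$ lies in $I:=[\max\{0,u_1-r\},\ \min\{u_1,k-r\}]$: for such $v_1$ one has $0\le v_1\le u_1$ and $0\le k-r-v_1\le k-u_1$, so neither Gaussian binomial coefficient vanishes, whereas for $v_1\notin I$ the tuple is excluded from $\mathbf{Z}(q,m_{r,j})$. Consequently $A_j^{r}(C)\neq0$ precisely when $j=t\tfrac{q^{k}-q^{k-r}}{q-1}-\tfrac{q^{u_1}-1}{q-1}+\tfrac{q^{v_1}-1}{q-1}$ for some $v_1\in I$. Since this assignment $v_1\mapsto j$ is again strictly increasing in $v_1$, it is injective, so
\[
|\{j\mid A_j^{r}(C)\neq0\}|=|I\cap\mathbb{Z}|=\min\{u_1,k-r\}-\max\{0,u_1-r\}+1 ,
\]
the interval being nonempty because $1\le u_1\le k-1$ and $1\le r\le k$.

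It then remains to verify the elementary identity
\[
\min\{u_1,k-r\}-\max\{0,u_1-r\}+1=\min\{u_1+1,\ r+1,\ k-r+1,\ k-u_1+1\}.
\]
I would split into the four cases determined by the signs of $u_1-r$ and $u_1-(k-r)$. In each case the left side collapses to one of $u_1+1$, $k-r+1$, $r+1$, $k-u_1+1$, and the defining inequalities of that case (possibly after adding two of them, e.g.\ $u_1\le r$ together with $u_1\le k-r$ yields $2u_1\le k$, hence $u_1+1\le k-u_1+1$) force that value to be the minimum of the four; this in fact gives the identity with equality, so the stated inequality follows a fortiori. I do not expect a genuine obstacle here: the only points needing care are confirming that the second and third constraints defining $\mathbf{Z}(q,m_{r,j})$ are the binding ones (so that the feasible $v_1$ range is exactly $I$) and that the Gaussian binomial coefficients never vanish on $I$; after that the argument is pure bookkeeping.
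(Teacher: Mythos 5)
Your proof is correct and follows essentially the same route as the paper: specialize Theorem~\ref{xxd2}(d) to $s=1$ and count the admissible integers $v_1$ in the interval $[\max\{0,u_1-r\},\min\{u_1,k-r\}]$, whose length equals $\min\{u_1+1,r+1,k-r+1,k-u_1+1\}$. Your added observations (uniqueness of $v_1$ from $m_{r,j}=\frac{q^{v_1}-1}{q-1}$, non-vanishing of the Gaussian binomials, injectivity of $v_1\mapsto j$) are sound and in fact yield equality, which is slightly stronger than the stated inequality and than what the paper's terser argument records.
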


\begin{proof}
Assume $s=1.$
By Theorem~\ref{xxd2}, we know that
$$ |\{j\,|\, A_j^{r}(C) \neq 0\}| \leq |\{\mathbf{Z}(q, m_{r,j})\,|\, \mathbf{Z}(q, m_{r,j})\neq \emptyset\}|.$$
Then we enumerate the integer $v_1$ satisfying  $  \max\{0,\,u_1-r\}\leq v_1\leq \min\{u_1,\,k-r\}.$
Hence we have that $$ |\{j\,|\, A_j^{r}(C) \neq 0\}| \leq\min\{ u_1+1,\,r+1,\,k-r+1,\,k-u_1+1\} .$$
\end{proof}

By Corollary~\ref{2a}, we know that the linear codes constructed in Theorem~\ref{xxd2} have few subcode support weights with dimension $r$, when $s=1$ and $\min\{ u_1+1,\,r+1,\,k-r+1,\,k-u_1+1\}$ is small.

\begin{Example}\label{ee12p}
Assume $q=s=u_1=2,$ $u_2=3$ and $k=4$ in Theorem~\ref{xxd2}.
Let  
{ \tiny $$S_{2,4}=\left(\begin{array}{ccccc ccccc ccccc}
    1&0&1&0&1&0&1 &0&1&0&1&0&1&0&1\\
  0&1&1&0&0&1&1 &0&0&1&1&0&0&1&1\\
   0&0&0&1&1&1&1 &0&0&0&0&1&1&1&1\\
   0&0&0&0&0 &0&0 &1&1&1&1&1&1&1&1
\end{array}\right),$$
$$ G_1=\left(\begin{array}{ccc }
      1&0 &1\\
      0&1&1\\
      0&0&0\\
      0&0&0
\end{array}\right) \text{ and } G_2=\left(\begin{array}{cc ccccc}
      1&0&1&0&1&0&1\\
      0&1&1&0&0&1&1 \\
      0&0&0&1&1&1&1\\
      0&0&0&0&0 &0&0
\end{array}\right) .$$}
Let $C_1$ be the $[20,4,10]_2$-linear code with the generator matrix $[S_{2,4}\backslash G_1,S_{2,4}\backslash G_2]$,
and let $C_2$ be the $[35,4,18]_2$-linear code with the generator matrix $[S_{2,4}\backslash G_1,S_{2,4}\backslash G_2,S_{2,4}].$
The linear codes $C_1$ and $C_2$ are Griesmer codes.
By Magma and Theorem~\ref{xxd2}, the subcode support weight distributions of $C_1$ and $C_2$ are listed in Table~\ref{tr}.
\begin{table}[htbp]

{ \small
\caption{Parameters of  $C_1$ and $C_2$ in Example~\ref{ee12p}}
\label{tr}
\center
\begin{tabular}{c |c}

parameters of $C_1$& parameters of $C_2$\\ 
$[20,4,10]_2$&$[35,4,18]_2$ \\ \hline
$1$-SSWD of $C_1$&$1$-SSWD of $C_2$\\
\{[10,12],[12,2],[16,1]\}&\{[18,12],[20,2],[24,1]\}\\ \hline
$2$-SSWD of $C_1$&$2$-SSWD of $C_2$\\
\{[15,16],[16,12],[18,6],[20,1]\}& \{[27,16],[28,12],[30,6],[32,1]\}\\ \hline
$3$-SSWD of $C_1$&$3$-SSWD of $C_2$\\
 \{[18,8],[19,4],[20,3]\}& \{[32,8],[33,4],[34,3]\}\\

\end{tabular}

}
\end{table}

\end{Example}

\subsection{The subcode support weight distributions  of $r$-Griesmer codes for $r\ge 2$}

In the following theorem, we construct an infinite family of distance-optimal $r$-Griesmer codes for $r\ge 2$, and we determine the subcode support weight distributions of those codes.

\begin{Theorem}\label{xxd}
Assume the notation is as given above. For integers $$0=u_0<1= u_1< u_2< \cdots< u_s< u_{s+1}=k$$ and $2\leq s\leq t+1,$
 there exists  an $[n,k,d]_q$-linear code $C$ with  $$n=t\frac{q^k-1}{q-1}-\sum_{i=2}^s\frac{q^{u_i}-1}{q-1}+1\,\,\,\text{ and
} \,\,\, d=tq^{k-1}-\sum_{i=2}^sq^{u_i-1}. $$
And the linear code $C$ satisfies the following properties:

\begin{description}

     \item[(a)] 
  The weight distribution of the linear code $C$ is completely determined in the following table:
$$\begin{array}{c|c}
                                         \text{weight} & \text{multiplicity} \\
                                           \hline
                                         0 & 1 \\
                                        d&q^{k-1}-q^{k-u_2}\\
                                          d+1& q^{k}-q^{k-1} \\
                                          d+q^{u_2-1}& q^{k-u_2}-q^{k-u_3}\\
                                          d+q^{u_2-1}+q^{u_3-1}& q^{k-u_3}-q^{k-u_4}\\
                                         \vdots&\vdots\\
                                          d+\sum_{i=2}^{s-1}q^{u_i-1}&q^{k-u_{s-1}}-q^{k-u_s}\\
                                          d+\sum_{i=2}^sq^{u_i-1}&q^{k-u_s}-1\\
                                          \hline
                                        \end{array}  .$$
  \item[(b)]    The $r$-GHW of $C$ satisfies $$
   d_r(C) = \left\{
              \begin{array}{ll}
                t\frac{q^{k}-q^{k-r}}{q-1}-\sum_{i=2}^s\frac{q^{u_i}-q^{u_i-r}}{q-1}, & \hbox{$1\leq r<u_2$;}\\
               t\frac{q^{k}-q^{k-r}}{q-1}-\sum_{i=2}^s\frac{q^{u_i}-1}{q-1}
+\sum_{i=j_r+1}^s\frac{q^{u_i-r}-1}{q-1}+1 , & \hbox{$ u_2\leq r\leq k$.}
              \end{array}
            \right.
$$

 where $j_r=\max\{ j\,|\,2\leq j\leq s \text{ and } u_{j}\leq r\}.$

 \item[(c)] Assume $r=u_2$. The linear code $C$ is a $r$-Griesmer code, an almost Griesmer code and a distance-optimal code.

     \item[(d)]  Assume $v_0=0$ and $ v_{s+1}=k-r$. 
The $r$-SSWD of $C$ is $$A_j^{r}(C) =\sum_{( v_1,v_2,\cdots, v_s,v_{s+1})\in \mathbf{Z}(q, m_{r,j}) }  \prod_{i=1}^{s+1}q^{(u_{i-1}-v_{i-1})(v_{i}-v_{i-1})}\dst{u_{i}-u_{i-1}}{v_{i}-v_{i-1}}_q,$$
where  $m_{r,j}=j-t\frac{q^{k}-q^{k-r}}{q-1}+\sum_{i=2}^s\frac{q^{u_i}-1}{q-1}-1$ and 
 { \small $$ \mathbf{Z}(q, m_{r,j}) = \Biggm{\{}( v_1,\cdots, v_s,k-r)\in \mathbb{Z}^{s+1}\, \Biggm{|}\, 
\left\{
  \begin{array}{ll}
    0\leq v_i-v_{i-1}\leq u_i-u_{i-1}, & \hbox{$1\leq i\leq s$;} \\
    \max\{0,\,u_i-r\}\leq v_i\leq \min\{u_i,\,k-r\}, & \hbox{$1\leq i\leq s$;} \\
    m_{r,j}=\sum_{i=2}^{s} \frac{q^{v_i}-1}{q-1}-\frac{q^{v_1}-1}{q-1} .& \hbox{}
  \end{array}
\right.
  \Biggm{\}}.$$}

\end{description}
\end{Theorem}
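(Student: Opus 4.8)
The plan is to construct $C$ by a ``modified Solomon–Stiffler'' recipe parallel to the proof of Theorem~\ref{xxd2}, and then read off all four statements from a single weight formula. Concretely, with $\mathbf{e}_i$ the standard basis of $\mathbb{F}_q^k$, I would let $U_i=\langle \mathbf{e}_1,\dots,\mathbf{e}_{u_i}\rangle$ for $1\le i\le s+1$ (so $U_1$ is a line, since $u_1=1$), let $G_i$ be the $k\times\frac{q^{u_i}-1}{q-1}$ submatrix of $S_{q,k}$ whose columns lie in $U_i$, and take the generator matrix
\begin{equation*}
  G=[\,S_{q,k}\backslash G_2,\ S_{q,k}\backslash G_3,\ \cdots,\ S_{q,k}\backslash G_s,\ \underbrace{S_{q,k},\cdots,S_{q,k}}_{t-s+1}\,].
\end{equation*}
The one extra column (the ``$+1$'' in $n$) comes from the fact that we puncture $s-1$ simplex blocks rather than $s$; equivalently $G_1$ (a single column) is never removed, which is exactly what will break the Griesmer bound for small $r$ while forcing equality at $r=u_2$. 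Counting columns gives $n=t\frac{q^k-1}{q-1}-\sum_{i=2}^s\frac{q^{u_i}-1}{q-1}+1$ as claimed. The key identity, via Lemma~\ref{weight} and the additivity of $m_G$ in Remark (the one on $\tilde G\backslash G$), is that for $U\in{\rm SUB}^r(C)$ corresponding to $V\in{\rm SUB}^r(\mathbb{F}_q^k)$,
\begin{equation*}
  w(U)=t\frac{q^k-q^{k-r}}{q-1}-\sum_{i=2}^s\frac{q^{u_i}-1}{q-1}+1+\sum_{i=2}^s m_{G_i}(V^\perp)-m_{G_1}(V^\perp),
\end{equation*}
since $m_{\tilde G}(V^\perp)=t\frac{q^{k-r}-1}{q-1}$ and $m_{G_1}(V^\perp)\in\{0,1\}$ records whether the line $U_1\subseteq V^\perp$. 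Note $m_{G_i}(V^\perp)=\frac{q^{\dim(U_i\cap V^\perp)}-1}{q-1}$.

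For \textbf{(a)} I specialize to $r=1$, write a nonzero codeword as $\mathbf{y}G$, and as in Theorem~\ref{xxd2}(a) stratify nonzero $\mathbf{y}$ by the largest $j$ with $U_j\subseteq\langle\mathbf{y}\rangle^\perp$; the only new wrinkle is the $-m_{G_1}(\langle\mathbf{y}\rangle^\perp)$ term, which separates the ``$U_1\subseteq\langle\mathbf{y}\rangle^\perp$'' codewords (weight $d$, multiplicity $q^{k-1}-q^{k-u_2}$ among those not containing $U_2$) from the ``$U_1\not\subseteq\langle\mathbf{y}\rangle^\perp$'' codewords (weight $d+1$, multiplicity $q^k-q^{k-1}$); the remaining rows are the usual Solomon–Stiffler telescoping $q^{k-u_i}-q^{k-u_{i+1}}$. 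For \textbf{(b)} I minimize $\sum_{i=2}^s m_{G_i}(V)-m_{G_1}(V)$ over $V\in{\rm SUB}^{k-r}(\mathbb{F}_q^k)$ (using $d_r(C)=n-\max m_G$ and dualizing $V^\perp\mapsto V$): when $r<u_2$ one can arrange $U_i\cap V$ to have the minimum dimension $u_i-r$ for every $i\ge 2$ \emph{while simultaneously} keeping $U_1\subseteq V$ (there is room because $u_1=1\le u_2-r$ fails only if $r\ge u_2-1$; the case $r=u_2-1$ still works since $V$ has codimension $r$ and $U_1$ is a line — this boundary case is worth a sentence), giving $d_r=t\frac{q^k-q^{k-r}}{q-1}-\sum_{i=2}^s\frac{q^{u_i}-q^{u_i-r}}{q-1}$; when $r\ge u_2$ one cannot have $U_1\subseteq V$ and also $\dim(U_{j_r}\cap V)=0$-type constraints for the small $u_i$'s in the optimal configuration, so the $-m_{G_1}$ term is forced to $0$ and the extremal $V$ (spanned by $\mathbf{e}_{r+1},\dots,\mathbf{e}_k$) yields the second branch. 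The witness subspace and a matching lower bound (exactly as in Theorem~\ref{xxd2}(c)) make each branch an equality.

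For \textbf{(c)}, with $r=u_2$ I plug $d=d_1(C)=tq^{k-1}-\sum_{i=2}^s q^{u_i-1}$ into $\sum_{j=1}^{k-1}\lceil d/q^j\rceil$ and compute (the ceilings strip off the terms $q^{u_i-1-j}$ with $u_i-1<j$, just as in Theorem~\ref{xxd2}(b)) to get $\sum_{j=1}^{k-1}\lceil d/q^j\rceil=n-1$, so $\delta_1(C)=1$ — an almost Griesmer code — and then substitute $r=u_2$ into branch two of (b) and check $\delta_{u_2}(C)=0$ while $\delta_j(C)=1$ for $u_1\le j<u_2$ via branch one, so $u_2=\min\{j:\delta_j=0\}$ and $C$ is a $u_2$-Griesmer code; distance-optimality is immediate since $\delta_1(C)=1$ forces any improvement to meet the Griesmer bound, but an almost-Griesmer $[n,k]_q$ code with these parameters with $\tilde d=d+1$ would be Griesmer with $n$ one larger than a Griesmer code of distance $d+1$ allows — here I would instead just cite that $\delta_1=1$ together with the impossibility of a Griesmer code of length $n$ and distance $d+1$ (a length/dimension count, or a MAGMA check as the paper does elsewhere). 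For \textbf{(d)} I run the same computation as Theorem~\ref{xxd2}(d): $A_j^r(C)=|\{V\in{\rm SUB}^{k-r}(\mathbb{F}_q^k): m_{r,j}=\sum_{i=2}^s\frac{q^{v_i}-1}{q-1}-\frac{q^{v_1}-1}{q-1}\}|$ with $v_i=\dim(U_i\cap V)$, partition by the tuple $(v_1,\dots,v_s,k-r)$, and invoke Lemma~\ref{num}(b) to get the product formula; the constraints defining $\mathbf{Z}(q,m_{r,j})$ are precisely $0\le v_i-v_{i-1}\le u_i-u_{i-1}$ (nesting/monotonicity of intersection dimensions) and $\max\{0,u_i-r\}\le v_i\le\min\{u_i,k-r\}$ (dimension of an intersection inside $\mathbb{F}_q^k$).

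The main obstacle is \textbf{(b)}, specifically verifying that when $r<u_2$ the \emph{simultaneous} optimum — minimizing every $m_{G_i}(V)$ for $i\ge 2$ and maximizing $m_{G_1}(V)$ — is actually attainable by one subspace $V$, including the boundary case $r=u_2-1$; once the right explicit $V$ is exhibited and the term-by-term lower bound matches it, the rest is bookkeeping that mirrors Theorem~\ref{xxd2} line for line.
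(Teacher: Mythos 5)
Your route is the paper's own: the same subspaces $U_i=\langle\mathbf{e}_1,\dots,\mathbf{e}_{u_i}\rangle$, the same reduction via Lemma~\ref{weight} to the quantity $\sum_{i=2}^s m_{G_i}(V^{\perp})-m_{G_1}(V^{\perp})$, the same stratification of codewords for (a), the same witness subspaces and term-by-term lower bounds for (b), and the same appeal to Lemma~\ref{num} for (d). However, your construction is wrong as displayed. The matrix $G=[\,S_{q,k}\backslash G_2,\cdots,S_{q,k}\backslash G_s,S_{q,k},\cdots,S_{q,k}\,]$ with $t-s+1$ full simplex blocks has exactly $t\frac{q^k-1}{q-1}-\sum_{i=2}^s\frac{q^{u_i}-1}{q-1}$ columns, i.e.\ length $n-1$, not $n$: puncturing $s-1$ blocks instead of $s$ does not create the extra column, and the column $\mathbf{e}_1$ is in fact \emph{removed} from every punctured block $S_{q,k}\backslash G_i$ (since $U_1\subseteq U_i$), so it is not ``never removed.'' The paper's generator matrix is $G=[\,G_1,S_{q,k}\backslash G_2,\cdots,S_{q,k}\backslash G_s,S_{q,k},\cdots,S_{q,k}\,]$, with the single column $G_1$ appended; only then does $m_{\tilde G}(V^{\perp})=m_G(V^{\perp})+\sum_{i=2}^s m_{G_i}(V^{\perp})-m_{G_1}(V^{\perp})$ hold, and only then do the ``$+1$'' and the $-m_{G_1}(V^{\perp})$ term in your key weight identity appear. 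For the matrix you actually wrote down, the code is just the Theorem~\ref{xxd2} code attached to the data $u_2<\cdots<u_s$ (a Griesmer code), and statements (a), (b), (c) as claimed would be false for it; the rest of your argument silently uses the paper's matrix, so the construction must be corrected before the proof is coherent.

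Two further points. In (c), distance-optimality is \emph{not} ``immediate since $\delta_1(C)=1$'': an almost Griesmer code need not be distance-optimal, and ``a length/dimension count, or a MAGMA check'' is not a proof. The missing (short) argument is divisibility: since $u_i\ge 2$ for $i\ge 2$, the distance $d=tq^{k-1}-\sum_{i=2}^s q^{u_i-1}$ is divisible by $q$, hence $\lceil (d+1)/q\rceil=\lceil d/q\rceil+1$, so any $[n,k,d+1]_q$ code would have Griesmer sum at least $(n-1)+2=n+1$, a contradiction. Finally, in (b) your worry about $r=u_2-1$ is unnecessary (the witness $V_r^{\perp}=\langle\mathbf{e}_1,\mathbf{e}_{r+2},\dots,\mathbf{e}_k\rangle$ gives $\dim(U_i\cap V_r^{\perp})=u_i-r\ge 1$ for all $i\ge 2$ whenever $r<u_2$), while for $r\ge u_2$ the phrase ``the $-m_{G_1}$ term is forced to $0$'' is not quite the right statement: the correct lower bound comes from observing that $m_{G_1}(V^{\perp})=1$ forces $m_{G_2}(V^{\perp})\ge 1$, so $\sum_{i=2}^{j_r}m_{G_i}(V^{\perp})-m_{G_1}(V^{\perp})\ge 0$, which is how the paper matches the witness $\langle\mathbf{e}_{r+1},\dots,\mathbf{e}_k\rangle$.
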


\begin{proof}
Let $\mathbf{e}_i\in \mathbb{F}_q^k$ be the vector with all $0$s except for a $1$ in the $i$th coordinate.
Assume $U_i$ is the linear subspace generated by $\{\mathbf{e}_{1},\,\mathbf{e}_{2},\cdots ,\mathbf{e}_{u_i}\}$ for $1\leq i\leq s+1.$
Let $G_i$ be  the $k \times \frac{q^{u_i}-1}{q-1}$ submatrix  of $S_{q,k}$ such that each column of $G_i$ is in $U_i$.  
Let $C$ be the $[n,k]_q$-linear code with the generator matrix 
\begin{equation}\label{de}
  G=[G_1,S_{q,k}\backslash G_2, S_{q,k}\backslash G_3,\cdots,S_{q,k}\backslash G_s,\underbrace{S_{q,k},\cdots,S_{q,k}}_{t-s+1}],
\end{equation}
where $n=t\frac{q^k-1}{q-1}-\sum_{i=2}^s\frac{q^{u_i}-1}{q-1}+1$.


Note that, for any subspace $U\in {\rm SUB}^{r}(C)$, there exists a subspace $V\in \mathrm{SUB}^r (\mathbb{F}_q^k)$  such that $U=\{ \mathbf{y}G\,|\,  \mathbf{y}\in V\}.$
By Lemma~\ref{weight}, we know that the support weight of $U$ is 
\begin{equation}\label{uu6}
 w(U)= n-m_{G}( V^{\bot}),
\end{equation}
where $n=t\frac{q^k-1}{q-1}-\sum_{i=2}^s\frac{q^{u_i}-1}{q-1}+1$ and $\dim(V^{\bot})=k-r.$
Assume $\tilde{G}$ is the $k\times t\frac{q^k-1}{q-1}$ matrix $[\underbrace{S_{q,k}, S_{q,k},\cdots,S_{q,k}}_t].$
By the definition of the function $m_{G}$, we have that
\begin{equation}\label{oor6}
  m_{\tilde{G}}(V^{\bot})=m_{G}(V^{\bot})+\sum_{i=2}^s m_{G_i}(V^{\bot}) -m_{G_1}(V^{\bot}) .
\end{equation}

By the definition of $S_{q,k}$,
we know that $m_{\tilde{G}}(V^{\bot})=t\frac{q^{k-r}-1}{q-1} .$

By Equalities~(\ref{uu6}) and (\ref{oor6}), we have that
\begin{equation}\label{oo1}
  w(U)=t\frac{q^{k}-q^{k-r}}{q-1}-\sum_{i=2}^s\frac{q^{u_i}-1}{q-1}+1+\sum_{i=2}^s m_{G_i}(V^{\bot})-m_{G_1}(V^{\bot}).
\end{equation}

Note that $\dim(U_i)=u_i$, $\dim(V^{\bot})=k-r$ and $U_i+V^{\bot}\subseteq \mathbb{F}_q^k.$
Then we know that $$\max\{0,\,u_i-r\}\leq \dim(U_i\bigcap V^{\bot})\leq \min\{u_i,\,k-r\}.$$

Hence we get that
 $$m_{G_i}(V^{\bot})\ge \max\{0,\,\frac{q^{u_i-r}-1}{q-1}\}$$
 for $1\leq i\leq s$ and $m_{G_1}(V^{\bot})\in  \{0,1\}.$

{\bf (a)} 
The proof of Statement {\bf (a)} is similar to the proof of Theorem~\ref{xxd2}.


{\bf (b)} 
Assume $1=u_1\leq r<u_2.$
Then $m_{G_i}(V^{\bot})\ge \frac{q^{u_i-r}-1}{q-1}$ for $1\leq i\leq s$ and 
$$\sum_{i=2}^s m_{G_i}(V^{\bot})-m_{G_1}(V^{\bot})\ge \sum_{i=2}^s\frac{q^{u_i-r}-1}{q-1}-1.$$

 Assume that $V_r$ is the linear subspace generated by $\{\mathbf{e}_{2},\,\mathbf{e}_{3},\cdots,\mathbf{e}_{r+1}\}.$
 Then $V_r^{\bot}$ is the linear subspace generated by $\{\mathbf{e}_{1},\,\mathbf{e}_{r+2},\mathbf{e}_{r+3},\cdots,\mathbf{e}_{k}\}.$
 Note that $\dim(U_i\bigcap V_r^{\bot})= u_i-r  $ for $2\leq i\leq s$
 and $\dim(U_i\bigcap V_r^{\bot})=1$.
  Hence $$\sum_{i=2}^s m_{G_i}(V_r^{\bot})-m_{G_1}(V_r^{\bot})= \sum_{i=2}^s\frac{q^{u_i-r}-1}{q-1}-1.$$
 By Equality~(\ref{oo1}), we have that
 $d_r(C) = t\frac{q^{k}-q^{k-r}}{q-1}-\sum_{i=2}^s\frac{q^{u_i}-q^{u_i-r}}{q-1}$
 for $1\leq r<u_2$.

Assume $u_2\leq r\leq k.$
Let $j_r=\max\{ j\,|\,2\leq j\leq s \text{ and } u_{j}\leq r\}.$
Note that $\dim(V)=r$ and $\dim(V^{\bot})=k-r.$
Then $$\dim(U_i \bigcap V^{\bot})\ge 0$$ for $1\leq i\leq j_r$
and $\dim(U_i \bigcap V^{\bot})\ge u_i-r$ for $j_r+1\leq i\leq s.$
If $\dim(U_2\bigcap V^{\bot})= 0$, then $\dim(U_1\bigcap V^{\bot})= 0.$
Hence
$$\sum_{i=2}^{j_r} m_{G_i}(V_r^{\bot})-m_{G_1}(V^{\bot})\ge 0
  \text{ and }\sum_{i=j_r+1}^{s} m_{G_i}(V_r^{\bot})\ge \sum_{i=j_r+1}^{s}\frac{q^{u_i-r}-1}{q-1}.$$

 Assume that $\tilde{V } _r$ is the linear subspace generated by $\{\mathbf{e}_{1},\,\mathbf{e}_{2},\cdots,\mathbf{e}_{r}\}.$
 Then $\tilde{V}_r^{\bot}$ is the linear subspace generated by $\{\mathbf{e}_{r+1},\,\mathbf{e}_{r+2},\cdots,\mathbf{e}_{k}\}.$
 Note that $\dim(U_i\bigcap \tilde{V}_r^{\bot})= 0  $ for $1\leq i\leq j_r$
 and $\dim(U_i\bigcap \tilde{V}_r^{\bot})= u_i-r  $ for $j_r+1\leq i\leq s,$
 where $\dim(U_i)=u_i$.
  Hence  $$\sum_{i=2}^s m_{G_i}(\tilde{V}_r^{\bot})-m_{G_1}(\tilde{V}_r^{\bot})= \sum_{i=j_r+1}^s\frac{q^{u_i-r}-1}{q-1}.$$
 By Equality~(\ref{oo1}), we have that $d_r(C)=t\frac{q^{k}-q^{k-r}}{q-1}-\sum_{i=2}^s\frac{q^{u_i}-q^{u_i-r}}{q-1}
+\sum_{i=j_r+1}^s\frac{q^{u_i-r}-1}{q-1}+1,$
 where $j_r=\max\{ j\,|\,2\leq j\leq s \text{ and } u_{j}\leq r\}.$

 {\bf (c)}
It is easy to prove that the linear code $C$ is a $u_2$-Griesmer code and an almost Griesmer code by Statement  {\bf (b)}.
Suppose there exists an $[n, k,\bar{d}]_q$-linear code $\bar{C}$ such that $\bar{d}\ge d+1.$
Since $d$ is divisible by $q$, we have that $\lceil \frac{\bar{d}}{q}\rceil\ge  \lceil \frac{d+1}{q}\rceil=\lceil\frac{d}{q}\rceil+1.$
Then $$\sum_{i=0}^{k-1}\lceil \frac{\bar{d}}{q^i}\rceil \ge  \sum_{i=0}^{k-1}\lceil \frac{d}{q^i}\rceil+2=n+1,$$
which is a contradiction to the Griesmer bound for the minimum Hamming
weights of linear codes. Therefore $C$ is a distance-optimal code.
 
{\bf (d)} 
Note that $A_j^{r}(C)=0$ for $1\leq j<d_r(C).$
Then we assume $d_r(C)\leq j\leq n.$
 We use Equality~(\ref{oo1})
to determine the $r$-SSWD $[A_1^{r}(C), A_2^{r}(C),\cdots, A_n^{r}(C)]$ of $C$.
Let $$ m_{r,j}=j-t\frac{q^{k}-q^{k-r}}{q-1}+\sum_{i=2}^s\frac{q^{u_i}-1}{q-1}-1.$$
Since $m_{G_i}(V)=\frac{q^{\dim(U_i\bigcap V)}-1}{q-1}$ for $1\leq i\leq s$, we have that 
\begin{eqnarray*}
  A_j^{r}(C) &=& |\{ U\in {\rm SUB}^{r}(C)\,|\,w(U)=j\}| \\
   &=& |\{ V\in {\rm SUB}^{r}(\mathbb{F}_q^k)\,|\,     m_{r,j}=\sum_{i=2}^s m_{G_i}(V^{\bot})-m_{G_1}(V^{\bot})\}|\\
&=& |\{ V\in {\rm SUB}^{k-r}(\mathbb{F}_q^k)\,|\,     m_{r,j}=\sum_{i=2}^s m_{G_i}(V)-m_{G_1}(V)\}|\\
&=& \sum_{( v_1,v_2,\cdots, v_s,k-r)\in \mathbf{Z}(q, m_{r,j})}  |\{ V\in {\rm SUB}^{k-r}(\mathbb{F}_q^k)\,|\,  \dim(U_i\bigcap V)=v_i \text{ for }  1\leq i\leq s \}|\\
&=&\sum_{( v_1,v_2,\cdots, v_s,k-r)\in \mathbf{Z}(q, m_{r,j})} N^{u_1, u_2, \cdots, u_s,k}_{v_1,v_2,\cdots, v_s,k-r} ,
\end{eqnarray*}

where 
 { \small $$ \mathbf{Z}(q, m_{r,j}) = \Biggm{\{}( v_1,\cdots, v_s,k-r)\in \mathbb{Z}^{s+1}\, \Biggm{|}\, 
\left\{
  \begin{array}{ll}
    0\leq v_i-v_{i-1}\leq u_i-u_{i-1}, & \hbox{$1\leq i\leq s$;} \\
    \max\{0,\,u_i-r\}\leq v_i\leq \min\{u_i,\,k-r\}, & \hbox{$1\leq i\leq s$;} \\
    m_{r,j}=\sum_{i=2}^{s} \frac{q^{v_i}-1}{q-1}-\frac{q^{v_1}-1}{q-1} .& \hbox{}
  \end{array}
\right.
  \Biggm{\}  }.$$}

Assume $v_0=0$ and  $ v_{s+1}=k-r.$ By Lemma~\ref{num}, we have that
$$A_j^{r}(C) =\sum_{( v_1,v_2,\cdots, v_s,v_{s+1})\in \mathbf{Z}(q, m_{r,j}) }  \prod_{i=1}^{s+1}q^{(u_{i-1}-v_{i-1})(v_{i}-v_{i-1})}\dst{u_{i}-u_{i-1}}{v_{i}-v_{i-1}}_q.$$

 \end{proof}

\begin{Remark}
If the set $\mathbf{Z}(q, m_{r,j})$ of Statement \textbf{(c)} in Theorem~\ref{xxd} is empty, then
 $A_j^{r}(C) =0$.
 When $1\leq s\leq q-1,$ the size of the set $\mathbf{Z}(q, m_{r,j})$ in Theorem~\ref{xxd} is less than two by using the condition $m_{r,j}=\sum_{i=2}^{s} \frac{q^{v_i}-1}{q-1}-\frac{q^{v_1}-1}{q-1}$ and the uniqueness of the $q$-adic expansion.
\end{Remark}

\begin{Corollary}\label{2}
Assume the notation is as given above and $s=2$.
For an integer $r$ with $1\leq r\leq k$,
then $$ |\{j\,|\, A_j^{r}(C) \neq 0\}| \leq \min\{ 2u_2,\,2(r+1),\,2(k-r+1),\,2(k-u_2+1)\}.$$ 
\end{Corollary}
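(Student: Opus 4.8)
The plan is to mimic exactly the proof of Corollary~\ref{2a}, but now tracking the two-term condition $m_{r,j}=\frac{q^{v_2}-1}{q-1}-\frac{q^{v_1}-1}{q-1}$ coming from Statement \textbf{(d)} of Theorem~\ref{xxd}. As in Corollary~\ref{2a}, I would first observe that
$$|\{j\,|\,A_j^r(C)\neq 0\}|\leq |\{\mathbf{Z}(q,m_{r,j})\,|\,\mathbf{Z}(q,m_{r,j})\neq\emptyset\}|,$$
so it suffices to bound the number of distinct values taken by $m_{r,j}=\frac{q^{v_2}-1}{q-1}-\frac{q^{v_1}-1}{q-1}$ as $(v_1,v_2)$ ranges over the admissible pairs. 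Here $s=2$, $u_1=1$, so the constraints from $\mathbf{Z}(q,m_{r,j})$ become $0\leq v_1\leq 1$, $0\leq v_2-v_1\leq u_2-1$, together with $\max\{0,u_2-r\}\leq v_2\leq\min\{u_2,k-r\}$ (and $\max\{0,1-r\}\leq v_1\leq\min\{1,k-r\}$, which for $r\geq 1$ just says $0\leq v_1\leq 1$, unless $k-r=0$).

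The key step is a clean counting of admissible $(v_1,v_2)$. Since $v_1\in\{0,1\}$, each value of $v_2$ contributes at most $2$ values of $m_{r,j}$; hence $|\{j\,|\,A_j^r(C)\neq 0\}|$ is at most twice the number of admissible $v_2$. Now I count the admissible $v_2$ four ways, exactly paralleling Corollary~\ref{2a}: the interval $\max\{0,u_2-r\}\leq v_2\leq\min\{u_2,k-r\}$ has at most $u_2+1$ integers (from $0\leq v_2\leq u_2$), at most $r+1$ integers (from $u_2-r\leq v_2\leq u_2$), at most $k-r+1$ integers (from $0\leq v_2\leq k-r$), and at most $k-u_2+1$ integers (from $u_2-r\leq v_2\leq k-r$, whose length is $(k-r)-(u_2-r)=k-u_2$). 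Taking the minimum and multiplying by $2$ gives
$$|\{j\,|\,A_j^r(C)\neq 0\}|\leq 2\min\{u_2+1,\,r+1,\,k-r+1,\,k-u_2+1\}=\min\{2u_2+2,\,2(r+1),\,2(k-r+1),\,2(k-u_2+1)\}.$$
The stated bound has $2u_2$ rather than $2u_2+2$ in the first slot; to recover that I would refine the $v_2$-range bound slightly: when $v_2=0$ we are forced to have $v_1=0$ (since $v_1\le v_2$), and when $v_2=u_2$ combined with $v_1=1$ we may double-count a value already produced — more robustly, note $u_2=u_1+(u_2-u_1)=1+(u_2-1)$, so the number of admissible $v_2$ with $0\le v_2\le u_2$ that genuinely yield two distinct $m_{r,j}$ values is at most $u_2$ after accounting for the forced endpoint behavior; I expect a short case analysis on the extreme values of $v_2$ to close this $+2$ gap.

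The main obstacle I anticipate is precisely this fine endpoint bookkeeping needed to sharpen $2(u_2+1)$ to $2u_2$: one must verify that the two boundary values $v_2=0$ and $v_2=u_2$ do not each independently contribute two fresh values of $m_{r,j}$. For $v_2=0$ the only admissible $v_1$ is $0$, giving one value; so the naive count $2(u_2+1)$ already overcounts by at least $1$, and a symmetric argument at $v_2=u_2$ (or an argument that the $(v_1,v_2)=(1,u_2)$ value coincides with some $(0,v_2')$ value via the $q$-adic identity) recovers the second unit. Everything else is the routine interval-length arithmetic already carried out in Corollary~\ref{2a}, so no genuinely new idea is required — only careful edge-case tracking.
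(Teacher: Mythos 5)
Your overall strategy---bounding $|\{j\,|\,A_j^{r}(C)\neq 0\}|$ by the number of admissible pairs $(v_1,v_2)$ appearing in $\mathbf{Z}(q,m_{r,j})$---is exactly the paper's, but your execution leaves a gap at precisely the point that separates $2u_2$ from $2(u_2+1)$, and the patches you sketch do not close it as stated. The clean fix is simply to use the first constraint in $\mathbf{Z}(q,m_{r,j})$, namely $0\leq v_2-v_1\leq u_2-u_1=u_2-1$, \emph{for each fixed} $v_1\in\{0,1\}$: it confines $v_2$ to an interval of exactly $u_2$ integers, so intersecting with $\max\{0,\,u_2-r\}\leq v_2\leq \min\{u_2,\,k-r\}$ gives at most $\min\{u_2,\,r+1,\,k-r+1,\,k-u_2+1\}$ admissible values of $v_2$ for each of the two choices of $v_1$, hence at most $2\min\{u_2,\,r+1,\,k-r+1,\,k-u_2+1\}=\min\{2u_2,\,2(r+1),\,2(k-r+1),\,2(k-u_2+1)\}$ pairs in total. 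This is exactly the enumeration the paper performs (it lists the condition $0\leq v_2-v_1\leq u_2-1$ alongside the interval condition), and it makes any endpoint case analysis unnecessary.

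Two specific problems with your proposed repair of the $+2$ discrepancy: (i) at $v_2=u_2$ the pair $(v_1,v_2)=(0,u_2)$ is excluded not because of any double counting of values but because it violates $v_2-v_1\leq u_2-1$, a constraint you never invoke; (ii) your alternative suggestion that the value produced by $(1,u_2)$ coincides with a value produced by some $(0,v_2')$ is false: for $u_2\geq 2$ one has $\frac{q^{u_2}-1}{q-1}-1=q\cdot\frac{q^{u_2-1}-1}{q-1}\equiv 0 \pmod{q}$, whereas $\frac{q^{v_2'}-1}{q-1}\equiv 1\pmod{q}$ for every $v_2'\geq 1$ (and equals $0$ only for $v_2'=0$), so no such coincidence occurs. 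Your argument at $v_2=0$ (that $v_1=0$ is forced) is correct, and your interval arithmetic giving the $r+1$, $k-r+1$ and $k-u_2+1$ bounds is fine; once the per-$v_1$ count above replaces the ``count $v_2$ then double'' step, the corollary follows.
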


\begin{proof}
By Theorem~\ref{xxd}, we know that
$$ |\{j\,|\, A_j^{r}(C) \neq 0\}| \leq |\{\mathbf{Z}(q, m_{r,j})\,|\, \mathbf{Z}(q, m_{r,j})\neq \emptyset\}|.$$
Then we enumerate all the integers $v_1$ and $v_2$ satisfying $0\leq v_1\leq 1$, $0\leq v_2-v_{1}\leq u_2-1$ and $  \max\{0,\,u_2-r\}\leq v_2\leq \min\{u_2,\,k-r\}.$
Hence we have that $$ |\{j\,|\, A_j^{r}(C) \neq 0\}| \leq \min\{ 2u_2,\,2(r+1),\,2(k-r+1),\,2(k-u_2+1)\} .$$
\end{proof}

By Corollary~\ref{2}, we know that the linear codes constructed in Theorem~\ref{xxd} have few subcode support weights with dimension $r$, when $s=2$ and $\min\{ 2u_2+1,\,2(r+1),\,2(k-r+1),\,2(k-u_2+1)\}$ is small.

\begin{Example}\label{ee124}
Assume $s=2$ and $q=k-1=3$ in Theorem~\ref{xxd}.
Let  
{ \tiny$$S_{3,4}=\left(\begin{array}{c|ccc |c ccccc ccc|c}
1&0&1 &2&0&1  &2&0 &1&2&0&1&2& \cdots\\
0&1&1 &1&0&0  &0&1 &1&1&2&2&2&\cdots\\
0&0&0&0&1&1  &1&1  &1&1&1&1&1&\cdots\\
0&0&0&0&0&0  &0&0  &0&0&0&0&0&\cdots
\end{array}\right), G_1=\left(\begin{array}{c}
      1\\
      0 \\
      0\\
      0
\end{array}\right) , $$
$$
 G_2=\left(\begin{array}{cc cc}
      1&0&1&2\\
      0&1&1&1 \\
      0&0&0&0\\
      0&0&0&0
\end{array}\right) \text{ and } G_{3}=\left(\begin{array}{cccc c ccccc ccc}
1&0&1 &2&0&1  &2&0 &1&2&0&1&2\\
0&1&1 &1&0&0  &0&1 &1&1&2&2&2\\
0&0&0&0&1&1  &1&1  &1&1&1&1&1\\
0&0&0&0&0&0  &0&0  &0&0&0&0&0
\end{array}\right).$$
}
Let $C_1$ be the $[37,4,24]_3$-linear code with the generator matrix $$[G_1,S_{4,3}\backslash G_2],$$
which is on the condition of $u_2=2$ in Theorem~\ref{xxd}. And let  $C_2$ be the $[28,4,18]_3$-linear code with the generator matrix $$[G_1,S_{4,3}\backslash G_3],$$ which is on the condition of $u_2=3$ in Theorem~\ref{xxd}.
The linear code $C_1$ is a $2$-Griesmer code, and  the linear code $C_2$ is a  $3$-Griesmer code. 
The linear codes $C_1$ and $C_2$ are almost Griesmer codes and distance-optimal codes.
By Magma and Theorem~\ref{xxd}, the subcode support weight distributions of $C_1$ and $C_2$ are listed in Table~\ref{t23}.

\begin{table}[htbp]
\caption{Parameters of $C_1$ and $C_2$ in Example~\ref{ee124}}
\label{t23}

\center
{ \small
\begin{tabular}{c |c}

parameters of $C_1$&parameters of $C_2$\\ 
$[37,4,24]_3$&$[28,4,18]_3$ \\ \hline
$1$-SSWD of $C_1$&$1$-SSWD of $C_2$\\
\{[24,9],[25,27],[27,4]\}&\{[18,12],[19,27],[27,1]\}\\ \hline
$2$-SSWD of $C_1$&$2$-SSWD of $C_2$\\
\{[33,93],[34,36],[36,1]\}& \{[24,9],[25,108],[27,4],[28,9]\}\\ \hline
$3$-SSWD of $C_1$&$3$-SSWD of $C_2$\\
 \{[36,37],[37,3]\}& \{[27,28],[28,12]\}\\

\end{tabular}
}
\end{table}

\end{Example}

\section{The subcode support weight distributions of modified Solomon-Stiffler codes}
In this section, we construct a family of distance-optimal $r$-Griesmer codes by modified Solomon-Stiffler codes, and determine their subcode support weight distributions. And $r$-Griesmer codes in this section are different from codes in the last section.

When we determine the subcode support weight distributions of linear codes constructed in this section, we need the following lemma. For integers $0< u_1< u_2<  u_3< k$, assume $U_i\in{\rm SUB}^{r}(\mathbb{F}_q^k) $ 
such that $\dim(U_i)=u_i.$ 
Let $$M^{u_1, u_2, u_3,k}_{v_1, v_2, v_3,r}=|\{ V\in  {\rm SUB}^{r}(\mathbb{F}_q^k)\,|\, 
 \dim(U_i\bigcap V)=v_i \text{ for  } 1\leq i\leq 3\}|,$$
where $U_1=U_2\bigcap U_3$.

\begin{lem}\label{xx14}
Assume the notation is as given above. Then 
\begin{description}
\item[(a)] If $u_1=v_1=0$ and $u_2+ u_3=k,$ 
then $$M^{0, u_2, u_3,u_2+ u_3}_{0, v_2, v_3,v_2+v_3+t}=\left\{
         \begin{array}{ll}
           \dst{u_2-v_2}{t}_q\dst{u_3-v_3}{t}_q\dst{u_2}{v_2}_q\dst{u_3}{v_3}_q\prod_{i=0}^{t-1}(q^{t}-q^{i}), & \hbox {$t>0$;} \\
             \dst{u_2}{v_2}_q\dst{u_3}{v_3}_q, & \hbox{$ t=0$.}
             \end{array}
          \right.
$$

  \item[(b)] If $u_1=v_1=0,$ then $$M^{0, u_2, u_3,k}_{0, v_2, v_3,r}
=\sum_{t=0}^{\tilde{t}}q^{(u_2+u_3-v_2-v_3-t)(r-v_2-v_3-t)}M^{0, u_2, u_3,u_2+ u_3}_{0, v_2, v_3,v_2+ v_3+t} \dst{k-u_2-u_3}{r-v_2-v_3-t}_q ,$$
where $\tilde{t}=\min\{u_2-v_2, u_3-v_3,r-v_2-v_3\}$.

  \item[(c)] If $u_1=1$ and $v_1=1,$ then 
$$M^{1, u_2, u_3,k}_{1, v_2, v_3,r}=M^{0, u_2-1, u_3-1,k-1}_{0, v_2-1, v_3-1,r-1}.$$
  \item[(d)] 
Assume $r<k$. If $u_1=1$ and $v_1=0,$ then 
$$M^{1, u_2, u_3,k}_{0, v_2, v_3,r}=q^rM^{1, u_2, u_3,k}_{1, v_2+1, v_3+1,r+1}=q^rM^{0, u_2-1, u_3-1,k-1}_{0, v_2, v_3,r}.$$
\end{description}
\end{lem}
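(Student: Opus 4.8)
The plan is to prove the four statements in order, since each later one reduces to an earlier one. For statement \textbf{(a)}, I would count the subspaces $V\in{\rm SUB}^r(\mathbb{F}_q^k)$ with $k=u_2+u_3$ and $\dim(U_i\cap V)=v_i$ by building $V$ in three stages. First choose the intersection pieces $V\cap U_2$ and $V\cap U_3$ inside $U_2$ and $U_3$ respectively; since $U_2+U_3=\mathbb{F}_q^k$ and $U_2\cap U_3=U_1=\{\mathbf 0\}$, these two choices are independent, contributing the factor $\dst{u_2}{v_2}_q\dst{u_3}{v_3}_q$. Then $r=v_2+v_3+t$, so the remaining $t$ dimensions of $V$ lie in neither $U_2$ nor $U_3$; writing $V=(V\cap U_2)\oplus (V\cap U_3)\oplus W$ with $\dim W=t$, I count the $W$'s by choosing an ordered basis of $W$ modulo the already-fixed part, where each new basis vector must avoid $U_2$ and $U_3$ — this is where the factor $\prod_{i=0}^{t-1}(q^t-q^i)$ (basis-counting) together with $\dst{u_2-v_2}{t}_q\dst{u_3-v_3}{t}_q$ (accounting for which complement in $U_2$, resp. $U_3$, the projection of $W$ lands in) comes from. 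The $t=0$ case is immediate. The main care here is to verify that the stratification $V\mapsto (V\cap U_2,V\cap U_3,\overline W)$ is genuinely a bijection onto the indexing data, i.e. that no overcounting occurs; this is the one genuinely delicate bookkeeping point.

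For statement \textbf{(b)}, I would drop the hypothesis $k=u_2+u_3$ and instead fix a complement: write $\mathbb{F}_q^k = (U_2+U_3)\oplus P$ with $\dim P = k-u_2-u_3$. Given $V$ with the prescribed intersection dimensions, set $t=\dim(V\cap(U_2+U_3))-v_2-v_3$; then $0\le t\le \tilde t$ where $\tilde t=\min\{u_2-v_2,u_3-v_3,r-v_2-v_3\}$. For each value of $t$, the part of $V$ inside $U_2+U_3$ is counted by $M^{0,u_2,u_3,u_2+u_3}_{0,v_2,v_3,v_2+v_3+t}$ from part \textbf{(a)}, and extending $V$ across $P$ by the remaining $r-v_2-v_3-t$ dimensions contributes the standard affine-counting factor $q^{(u_2+u_3-v_2-v_3-t)(r-v_2-v_3-t)}\dst{k-u_2-u_3}{r-v_2-v_3-t}_q$ (number of ways to extend a fixed subspace of $U_2+U_3$ of a given dimension to an $r$-dimensional subspace meeting $U_2+U_3$ in exactly that subspace). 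Summing over $t$ gives the formula.

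Statements \textbf{(c)} and \textbf{(d)} are quotient/lift arguments. For \textbf{(c)}, since $U_1$ is a common line with $U_1\subseteq V$, passing to the quotient $\mathbb{F}_q^k/U_1\cong\mathbb{F}_q^{k-1}$ sets up a bijection between such $V$ and subspaces $\bar V\in{\rm SUB}^{r-1}(\mathbb{F}_q^{k-1})$ with $\dim(\bar U_i\cap\bar V)=v_i-1$, where $\bar U_i=U_i/U_1$ has dimension $u_i-1$; this is exactly $M^{0,u_2-1,u_3-1,k-1}_{0,v_2-1,v_3-1,r-1}$. For \textbf{(d)}, where $U_1$ is a line but $U_1\cap V=\{\mathbf 0\}$, I would count pairs: every $V$ with $\dim(U_1\cap V)=0$ extends to exactly $q^r$ subspaces $V'=V\oplus U_1$ of dimension $r+1$ containing $U_1$ (choose which hyperplane of $V'$ plays the role of $V$ among those not containing $U_1$; there are $\dst{r+1}{r}_q-\dst{r}{r}_q\cdot(\text{those through }U_1)$... more cleanly, $q^r$ such complements), and conversely such $V'$ has $\dim(U_i\cap V')=v_i+1$, giving the first equality $M^{1,u_2,u_3,k}_{0,v_2,v_3,r}=q^r M^{1,u_2,u_3,k}_{1,v_2+1,v_3+1,r+1}$; then apply \textbf{(c)} to rewrite the right side as $q^r M^{0,u_2-1,u_3-1,k-1}_{0,v_2,v_3,r}$. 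The hardest part overall is part \textbf{(a)}: getting the basis-counting factor and the two Gaussian binomials to line up without double-counting requires choosing the right normal form for $V$ relative to the decomposition $\mathbb{F}_q^k=U_2\oplus U_3$, and I would organize it as an explicit count of ordered bases followed by division by $|{\rm GL}_r(\mathbb{F}_q)|$ restricted appropriately.
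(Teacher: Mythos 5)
Your route is essentially the paper's. Parts \textbf{(c)} and \textbf{(d)} are proved there exactly as you propose: \textbf{(c)} via the quotient map $\mathbb{F}_q^k\to\mathbb{F}_q^k/U_1$ (the paper also records the small check $\phi(U_2)\cap\phi(U_3)=\phi(U_2\cap U_3)=\{\mathbf 0\}$, which is what legitimizes the symbol $M^{0,u_2-1,u_3-1,k-1}_{0,v_2-1,v_3-1,r-1}$ in the quotient), and \textbf{(d)} via the map $V\mapsto V+U_1$ with fibers of size $q^r$. For \textbf{(a)} and \textbf{(b)} the paper gives no argument at all — it cites Lemmas 4.1 and 4.3 of \cite{SL} — so your sketches are supplying what the paper outsources; they are the standard arguments, and your extension factor in \textbf{(b)} is precisely the paper's Lemma~\ref{num}(a) applied to the pair $U_2+U_3\subseteq\mathbb{F}_q^k$. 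The bookkeeping you defer in \textbf{(a)} is cleanest if you first fix $A=V\cap U_2$, $B=V\cap U_3$ and pass to $\mathbb{F}_q^k/(A\oplus B)\cong (U_2/A)\oplus(U_3/B)$: the remaining $t$ dimensions form a subspace meeting both summands trivially, i.e.\ the graph of an isomorphism between a $t$-dimensional subspace of $U_2/A$ and one of $U_3/B$, giving $\dst{u_2-v_2}{t}_q\dst{u_3-v_3}{t}_q\,|\mathrm{GL}_t(\mathbb{F}_q)|$ directly, with no division by a restricted $|\mathrm{GL}_r(\mathbb{F}_q)|$ needed.

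The one substantive omission is in \textbf{(d)}. Counting the $q^r$ complements of the line $U_1$ inside $\tilde V=V\oplus U_1$ only yields $M^{1,u_2,u_3,k}_{0,v_2,v_3,r}=q^rM^{1,u_2,u_3,k}_{1,v_2+1,v_3+1,r+1}$ if \emph{every} $r$-dimensional $V\subseteq\tilde V$ with $V\cap U_1=\{\mathbf 0\}$ automatically satisfies $\dim(U_i\cap V)=v_i$ for $i=2,3$; a priori such a $V$ could contain all of $U_i\cap\tilde V$ and meet $U_i$ in dimension $v_i+1$, which would make the fiber smaller than $q^r$. This is exactly where the hypothesis $U_1=U_2\cap U_3$ does its work, and the paper verifies it: since $U_1\subseteq U_i\cap\tilde V$, the containment $U_i\cap\tilde V\subseteq V$ would force $U_1\subseteq V$, contradicting $V\cap U_1=\{\mathbf 0\}$; hence $\dim(U_i\cap V)=\dim\bigl((U_i\cap\tilde V)\cap V\bigr)=v_i$, so all $q^r$ complements do lie in the set being counted. (Similarly, the forward direction needs $U_i\cap(V+U_1)=(U_i\cap V)+U_1$, which follows from $U_1\subseteq U_i$; you assert this but should record it.) With that check inserted, and the \textbf{(a)} count organized as above, your argument is complete and coincides with the paper's, modulo the paper's reliance on \cite{SL} for \textbf{(a)} and \textbf{(b)}.
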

\begin{proof}
\textbf{(a)} It is Lemma 4.1 of \cite{SL}.

\textbf{(b)} It is Lemma 4.3 of \cite{SL}.

\textbf{(c)} 
Let $U_1=U_2\bigcap U_3$ and $\dim(U_1)=u_1=1$.
There is a unique map $\phi$ from $\mathbb{F}_q^k$ onto the quotient space $\mathbb{F}_q^k/U_1$
such that $$\phi(\mathbf{x})=\mathbf{x}+U_1.$$
Then $\phi$ induce a bijective map from $\{V\in  {\rm SUB}^{r}(\mathbb{F}_q^k)\,|\, U_1\subseteq V\}$ onto ${\rm SUB}^{r-1}(\mathbb{F}_q^k/U_1)$.
Note that $$\phi(U_2)\bigcap \phi( U_3)=\phi(U_2\bigcap U_3)=\{\mathbf{0}\}$$
and $$\dim(\phi(U_i)\bigcap \phi( V))=\dim(\phi(U_i\bigcap V))=v_i-1$$ for $i=2,3.$
Hence $M^{1, u_2, u_3,k}_{1, v_2, v_3,r}=M^{0, u_2-1, u_3-1,k-1}_{0, v_2-1, v_3-1,r-1}.$

\textbf{(d)} 
 Let $$\Lambda^{1, u_2, u_3,k}_{0, v_2, v_3,r}=\{ V\in  {\rm SUB}^{r}(\mathbb{F}_q^{k})\,|\, 
 \dim(U_i\bigcap V)=v_i \text{ for  } 1\leq i\leq 3\} $$ and 
$$\Lambda^{1, u_2, u_3,k}_{1, v_2+1, v_3+1,r+1}=\{ \tilde{V}\in  {\rm SUB}^{r+1}(\mathbb{F}_q^{k})\,|\, 
 \dim(U_i\bigcap V)=v_i+1 \text{ for  } 1\leq i\leq 3\} .$$
Then 
 $$M^{1, u_2, u_3,k}_{0, v_2, v_3,r}=|\Lambda^{1, u_2, u_3,k}_{0, v_2, v_3,r}|\text{ and 
}M^{1, u_2, u_3,k}_{1, v_2+1, v_3+1,r+1}=|\Lambda^{1, u_2, u_3,k}_{1, v_2+1, v_3+1,r+1} |.$$
We can construct a map $\psi$ from $\Lambda^{1, u_2, u_3,k}_{0, v_2, v_3,r}$ to $\Lambda^{1, u_2, u_3,k}_{1, v_2+1, v_3+1,r+1}$ such that $\psi(V)=V+U_1$ for any $V\in M^{1, u_2, u_3,k}_{0, v_2, v_3,r}.$
It is obvious that $\psi$ is surjective.

For any $\tilde{V}\in M^{1, u_2, u_3,k}_{1, v_2+1, v_3+1,r+1}$, 
we have that $\dim(\tilde{V}\bigcap U_i)=v_i+1$ for $1\leq i\leq 3.$
Since $\dim(\tilde{V})=\dim(V)+1$, we know that $$\dim((U_i\bigcap \tilde{V})\bigcap V )\in \{v_i,v_i+1\}$$ for $1\leq i\leq 3.$
Note that $$U_1\bigcap \tilde{V}=(U_2\bigcap \tilde{V})\bigcap(U_3\bigcap \tilde{V}).$$
Therefore, if $\dim((U_1\bigcap \tilde{V})\bigcap V )=v_1=0,$ then $\dim((U_i\bigcap \tilde{V})\bigcap V )=v_i$ for $1\leq i\leq 3.$
By Lemma~\ref{num}, we have that
\begin{eqnarray*}
  |\psi^{-1}(\tilde{V})| &=& |\{ V\in  {\rm SUB}^{r}(\tilde{V})\,|\, 
 \dim(U_i\bigcap V)=v_i \text{ for  } 1\leq i\leq 3\}| \\
&=& |\{ V\in  {\rm SUB}^{r}(\tilde{V})\,|\, 
 \dim((U_i\bigcap\tilde{V})\bigcap V)=v_i \text{ for  } 1\leq i\leq 3\}| \\
   &=&|\{ V\in  {\rm SUB}^{r}(\tilde{V})\,|\, 
 \dim(U_1\bigcap V)=v_1\}| \\
&=&q^r.
\end{eqnarray*}

Hence $M^{1, u_2, u_3,k}_{0, v_2, v_3,r}=q^rM^{1, u_2, u_3,k}_{1, v_2+1, v_3+1,r+1}=q^rM^{0, u_2-1, u_3-1,k-1}_{0, v_2, v_3,r}$ by Statement \textbf{(b)}.
\end{proof}

By Lemma~\ref{xx14}, the integers $M^{0, u_2, u_3,k}_{0, v_2, v_3,r}$, $M^{1, u_2, u_3,k}_{0, v_2, v_3,r}$ and $M^{1, u_2, u_3,k}_{1, v_2, v_3,r}$ are known. Then we can determine the the subcode support weight distributions of linear codes in the next theorem.

\begin{Theorem}\label{xxd4}
Assume the notation is as given above. For integers $$ 1=u_1< u_2<  u_3<k$$ 
such that $u_2+u_3\leq k+1,$
 there exists  an $[n,k,d]_q$-linear code $C$ with  $$n=\frac{q^k-1}{q-1}-\frac{q^{u_2}-1}{q-1}-\frac{q^{u_3}-1}{q-1}+1\,\,\,\text{ and
} \,\,\, d=q^{k-1}-q^{u_2-1}-q^{u_3-1}. $$
And the linear code $C$ satisfies the following properties:

\begin{description}
 \item[(a)] If $u_2+u_3< k+1,$ the weight distribution of the linear code $C$ is completely determined in the following table:
$$\begin{array}{c|c}
\text{weight }& \text{multiplicity} \\\hline
                                        0 & 1 \\
                                      d&q^{k-1}-q^{k-u_2}-q^{k-u_3}+q^{k-u_2-u_3+1}\\  
                 d+1& q^{k}-q^{k-1} \\
  d+q^{u_2-1}&q^{k-u_2}-q^{k-u_2-u_3+1}\\
 d+q^{u_3-1}&q^{k-u_3}-q^{k-u_2-u_3+1}\\
d+q^{u_2-1}+q^{u_3-1}&q^{k-u_2-u_3+1}-1\\
\hline
                                        \end{array}  .$$

\item[(b)] If $u_2+u_3= k+1,$ the weight distribution of the linear code $C$ is completely determined in the following table:
$$\begin{array}{c|c}
\text{weight }& \text{multiplicity} \\\hline
                                        0 & 1 \\
                                      d&q^{k-1}-q^{k-u_2}-q^{k-u_3}+q^{k-u_2-u_3+1}\\  
                 d+1& q^{k}-q^{k-1} \\
  d+q^{u_2-1}&q^{k-u_2}-q^{k-u_2-u_3+1}\\
 d+q^{u_3-1}&q^{k-u_3}-q^{k-u_2-u_3+1}\\
\hline
                                        \end{array}  .$$

  \item[(c)]    The $r$-GHW of $C$ satisfies $$d_r(C) = \left\{
              \begin{array}{ll}
                \frac{q^{k}-q^{k-r}-(q^{u_2}-q^{u_2-r})-(q^{u_3}-q^{u_3-r})}{q-1}, & \hbox{$1\leq r<u_2$;}\\
                \frac{q^{k}-q^{k-r}-(q^{u_2}-1)-(q^{u_3}-q^{u_3-r})}{q-1}+1 , & \hbox{$ u_2\leq r< u_3$;}\\
\frac{q^{k}-q^{k-r}-(q^{u_2}-1)-(q^{u_3}-1)}{q-1} +1, & \hbox{$ u_3\leq r\leq  k$.}\\
              \end{array}
            \right.$$

\item[(d)] The $r$-SSWD  of $C$ is $$A_j^{r}(C)=\sum_{(v_1, v_2, v_3)\in\mathbf{Z}(q, m_{r,j})  }M^{1, u_2, u_3,k}_{v_1, v_2, v_3,k-r},$$ where $m_{r,j}=j-\frac{q^{k}-q^{k-r}}{q-1}+\frac{q^{u_2}-1}{q-1}+\frac{q^{u_3}-1}{q-1}-1$
and 
{ \small $$\mathbf{Z}(q, m_{r,j}) =\Biggm{\{}( v_1, v_2, v_3)\in \mathbb{Z}^{3}\,\Biggm{|}\,  \left\{
        \begin{array}{ll}
     \max\{0,\,u_i-r\}\leq v_i\leq \min\{u_i,\,k-r\} , & \hbox{for $ 1\leq i\leq 3$;} \\
      0\leq v_i-v_1\leq u_i-1, & \hbox{for $ 2\leq i\leq 3$;} \\
      v_2+v_3\leq k-r+v_1,& \hbox{ }\\
      m_{r,j}=\sum_{i=2}^{3} \frac{q^{v_i}-1}{q-1}-\frac{q^{v_1}-1}{q-1}. & \hbox{ }
                                                                       \end{array}
                                                                     \right.
\Biggm{\}} .$$}

  \item[(e)] Assume $r=u_2$. The linear code $C$ is a $r$-Griesmer code, an almost Griesmer code and a distance-optimal code.


\end{description}
\end{Theorem}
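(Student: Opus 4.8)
The strategy is to build the code explicitly as a modified Solomon-Stiffler code, exactly mirroring the construction in Theorem~\ref{xxd} but with $s=3$ and $u_1=1$, and then to read off all five statements from one master weight formula. Concretely, let $\mathbf{e}_i\in\mathbb{F}_q^k$ be the standard basis vectors, let $U_1=\langle\mathbf{e}_1\rangle$, $U_2=\langle\mathbf{e}_1,\dots,\mathbf{e}_{u_2}\rangle$, $U_3=\langle\mathbf{e}_1,\dots,\mathbf{e}_{u_3}\rangle$ (so $U_1=U_2\cap U_3$, matching the hypothesis of Lemma~\ref{xx14}), and let $G_i$ be the submatrix of $S_{q,k}$ whose columns lie in $U_i$. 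Take $G=[G_1,\ S_{q,k}\backslash G_2,\ S_{q,k}\backslash G_3]$, a $k\times n$ matrix with $n=\frac{q^k-1}{q-1}-\frac{q^{u_2}-1}{q-1}-\frac{q^{u_3}-1}{q-1}+1$, and let $C$ be the code it generates. Using the Remark on puncturing together with $m_{S_{q,k}}(W)=\frac{q^{\dim W}-1}{q-1}$, for any $U\in{\rm SUB}^r(C)$ with associated $V\in{\rm SUB}^r(\mathbb{F}_q^k)$ and $W=V^\perp$ of dimension $k-r$, Lemma~\ref{weight} gives the master identity
\begin{equation*}
w(U)=\frac{q^k-q^{k-r}}{q-1}-\frac{q^{u_2}-1}{q-1}-\frac{q^{u_3}-1}{q-1}+1+m_{G_2}(W)+m_{G_3}(W)-m_{G_1}(W),
\end{equation*}
where $m_{G_i}(W)=\frac{q^{\dim(U_i\cap W)}-1}{q-1}$ and $m_{G_1}(W)\in\{0,1\}$.

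From this one identity every statement follows. For \textbf{(a)} and \textbf{(b)} specialize $r=1$, so $W=\langle\mathbf{y}\rangle^\perp$ is a hyperplane and $\dim(U_i\cap W)\in\{u_i-1,u_i\}$; the weight depends only on the pattern of which of $U_1,U_2,U_3$ are contained in $\langle\mathbf{y}\rangle^\perp$, and counting the number of $\mathbf{y}$ realizing each admissible pattern (using $|\{\mathbf{y}\ne0:U\subseteq\langle\mathbf{y}\rangle^\perp\}|=q^{k-\dim U}-1$ together with inclusion–exclusion, and noting $U_1\subseteq\langle\mathbf y\rangle^\perp$ whenever $U_2$ or $U_3$ is) yields the two tables, the split being governed by whether $q^{k-u_2-u_3+1}\geq 1$, i.e. $u_2+u_3<k+1$ versus $u_2+u_3=k+1$. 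For \textbf{(c)} minimize the master formula over $W\in{\rm SUB}^{k-r}(\mathbb{F}_q^k)$: the constraints $\max\{0,u_i-r\}\le\dim(U_i\cap W)\le\min\{u_i,k-r\}$ give a lower bound, and explicit choices of $W$ (generated by suitable $\mathbf{e}_j$'s, handling the three ranges $r<u_2$, $u_2\le r<u_3$, $u_3\le r\le k$ separately, and in each case deciding whether one can force $\mathbf{e}_1\in W$ or not so as to kill or not the $-m_{G_1}(W)$ term) show the bound is attained. For \textbf{(d)}, collect the $W$'s of dimension $k-r$ by the triple $(v_1,v_2,v_3)=(\dim(U_1\cap W),\dim(U_2\cap W),\dim(U_3\cap W))$; the count of such $W$ is precisely $M^{1,u_2,u_3,k}_{v_1,v_2,v_3,k-r}$, which is known by Lemma~\ref{xx14} (parts (c),(d) reduce $u_1=1$ to $u_1=0$, then (a),(b) evaluate it), and the admissible triples are exactly those in $\mathbf{Z}(q,m_{r,j})$. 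Finally \textbf{(e)} is the same divisibility argument as in Theorem~\ref{xxd}(c): $d$ is divisible by $q$, so from Statement \textbf{(c)} at $r=u_2$ the code meets the Griesmer bound for the $u_2$-GHW (hence is a $u_2$-Griesmer code) and $\delta_1(C)=1$ (almost Griesmer), and any $\bar C$ with $\bar d\ge d+1$ would force $\sum_{i=0}^{k-1}\lceil\bar d/q^i\rceil\ge n+1$, contradicting the Griesmer bound; thus $C$ is distance-optimal.

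The one point needing care, and the main obstacle, is \textbf{(d)}: one must check that the index set I am summing over is genuinely $\mathbf{Z}(q,m_{r,j})$, i.e. that the inequalities defining it — $\max\{0,u_i-r\}\le v_i\le\min\{u_i,k-r\}$, $0\le v_i-v_1\le u_i-1$ for $i=2,3$, and the crucial dimension constraint $v_2+v_3\le k-r+v_1$ coming from $\dim(U_2\cap W)+\dim(U_3\cap W)\le\dim((U_2+U_3)\cap W)+\dim(U_1\cap W)\le(k-r)+v_1$ — are exactly the feasibility conditions for $M^{1,u_2,u_3,k}_{v_1,v_2,v_3,k-r}$ to be nonzero, so that no spurious or missing terms appear. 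Everything else is bookkeeping with geometric series and $q$-binomial identities.
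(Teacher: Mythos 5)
Your overall plan --- derive one master weight identity from Lemma~\ref{weight}, specialize to $r=1$ for the weight distribution, minimize over $(k-r)$-dimensional $W$ for the GHWs, group the $W$'s by the triple $(\dim(U_1\cap W),\dim(U_2\cap W),\dim(U_3\cap W))$ and invoke Lemma~\ref{xx14} for the SSWD, then close with the Griesmer/divisibility argument --- is exactly the paper's route. But the construction you start from is wrong, and the whole proof rests on it. First, with $U_2=\langle\mathbf{e}_1,\dots,\mathbf{e}_{u_2}\rangle$ and $U_3=\langle\mathbf{e}_1,\dots,\mathbf{e}_{u_3}\rangle$ you have $U_2\subset U_3$, so $U_2\cap U_3=U_2\neq U_1$; the hypothesis $U_1=U_2\cap U_3$ of Lemma~\ref{xx14}, and the inclusion--exclusion $m_{G_2}(W)+m_{G_3}(W)-m_{G_1}(W)$ behind your master identity, both fail for this choice. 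The paper instead takes $U_3=\langle\mathbf{e}_1,\mathbf{e}_{k-u_3+2},\dots,\mathbf{e}_k\rangle$, which meets $U_2$ exactly in the line $U_1$; this is precisely where the hypothesis $u_2+u_3\leq k+1$ enters. Second, the matrix $G=[G_1,\,S_{q,k}\backslash G_2,\,S_{q,k}\backslash G_3]$ has $1+\frac{q^k-q^{u_2}}{q-1}+\frac{q^k-q^{u_3}}{q-1}=2\frac{q^k-1}{q-1}-\frac{q^{u_2}-1}{q-1}-\frac{q^{u_3}-1}{q-1}+1$ columns, not the claimed $n$; the code it generates is the code of Theorem~\ref{xxd} with $s=3$, $t=2$ (minimum distance $2q^{k-1}-q^{u_2-1}-q^{u_3-1}$), not the code of Theorem~\ref{xxd4}, and for that code the correct master formula carries $2\frac{q^{k}-q^{k-r}}{q-1}$, contradicting the identity you wrote down. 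The identity you wrote is the one for the modified Solomon--Stiffler code whose generator matrix consists of the columns of $S_{q,k}$ lying outside $U_2\cup U_3$, i.e.\ $G=S_{q,k}\backslash[G_2,G_3\backslash G_1]$ with $U_2\cap U_3=U_1$ a line; that is the construction you actually need.

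Once the construction is replaced by this one, the rest of your sketch does track the paper: the five containment patterns of $\langle\mathbf{y}\rangle^{\perp}$ against $U_1,U_2,U_3$ give tables (a)--(b) (the case $u_2+u_3=k+1$ merging away the last row because $q^{k-u_2-u_3+1}-1=0$), the three ranges of $r$ with explicit test subspaces give (c), your extra feasibility condition $v_2+v_3\leq k-r+v_1$ in (d) is the right one and follows from $\dim(U_2\cap W)+\dim(U_3\cap W)\leq\dim\bigl((U_2+U_3)\cap W\bigr)+\dim(U_1\cap W)$, and (e) is the same divisibility argument as in Theorem~\ref{xxd}. As written, however, the proposal proves statements about a different code, so the construction step is a genuine gap rather than a cosmetic slip.
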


\begin{proof}
Let $\mathbf{e}_i\in \mathbb{F}_q^k$ be the vector with all $0$s except for a $1$ in the $i$th coordinate.
Assume $U_i$ is the linear subspace generated by $\{\mathbf{e}_{1},\,\mathbf{e}_{2},\cdots ,\mathbf{e}_{u_i}\}$ for $1\leq i\leq 2,$
and $U_3$ is the linear subspace generated by $\{\mathbf{e}_{1}\}\bigcup \{ \mathbf{e}_{k-u_3+2},\,\mathbf{e}_{k-u_3+3},\cdots ,\mathbf{e}_{k}\},$ where $\dim(U_3)=u_3$.
Since $ u_2+u_3\leq k+1$, we have that $U_2 \bigcap U_3=U_1.$

Let $G_i$ be  the $k \times \frac{q^{u_i}-1}{q-1}$ submatrix  of $S_{q,k}$ such that each column of $G_i$ is in $U_i$ for $1\leq i\leq 3$.  And let $[G_2,G_3\backslash G_1]$ be  the $k \times \frac{q^{u_2}-1+q^{u_3}-q}{q-1}$ submatrix  of $S_{q,k}$ such that each column of $[G_2,G_3\backslash G_1]$ is in the subset $U_2\bigcup U_3$.
Then $$G=[S_{q,k}\backslash [G_2,G_3\backslash G_1]]$$ is the $k \times \frac{q^k-q^{u_2}-q^{u_3}+q}{q-1}$ submatrix  of $S_{q,k}$ such that each column of $G$ is in the subset $\mathbb{F}_q^k\backslash (U_2\bigcup U_3)$.
Let $C$ be the $[n,k]_q$-linear code with the generator matrix $G$,
where $$n=\frac{q^k-q^{u_2}-q^{u_3}+q}{q-1}.$$


Note that, for any subspace $U\in {\rm SUB}^{r}(C)$, there exists a subspace $V\in \mathrm{SUB}^r (\mathbb{F}_q^k)$  such that $U=\{ \mathbf{y}G\,|\,  \mathbf{y}\in V\}.$
By Lemma~\ref{weight}, we know that the subcode support weight of $U$ is 
\begin{equation}\label{uu61}
 w(U)= n-m_{G}( V^{\bot}),
\end{equation}
where $n=\frac{q^k-q^{u_2}-q^{u_3}+q}{q-1}$ and $\dim(V^{\bot})=k-r.$
Note that 
\begin{equation}\label{oor61}
  m_{G}(V^{\bot})= \frac{q^{k-r}-1}{q-1}-m_{G_2}(V^{\bot}) -m_{G_3}(V^{\bot}) +m_{G_1}(V^{\bot}) .
\end{equation}

By Equalities~(\ref{uu61}) and (\ref{oor61}), we have that
\begin{equation}\label{oo61}
  w(U)=\frac{q^{k}-q^{k-r}-(q^{u_2}-1)-(q^{u_3}-q) }{q-1}+m_{G_2}(V^{\bot})+ m_{G_3}(V^{\bot})-m_{G_1}(V^{\bot}).
\end{equation}

Note that $\dim(U_i)=u_i$, $\dim(V^{\bot})=k-r$ and $U_i+V^{\bot}\subseteq \mathbb{F}_q^k.$
Then we know that $$\max\{0,\,u_i-r\}\leq \dim(U_i\bigcap V^{\bot})\leq \min\{u_i,\,k-r\}.$$

Hence we get that
 $$m_{G_i}(V^{\bot})\ge \max\{0,\,\frac{q^{u_i-r}-1}{q-1}\}$$
 for $1\leq i\leq 3$ and $m_{G_1}(V^{\bot})\in  \{0,1\}.$

{\bf (a)-\bf (b)}   For any nonzero codeword ${\bf c}\in C$, there exists a unique nonzero vector ${\bf y}\in \mathbb{F}_q^k$ such that ${\bf c}={\bf y}G$. Assume $r=1.$ By Equality~(\ref{oo61}), we have that
\begin{equation}\label{oo66}
  w({\bf c})=\frac{q^{k}-q^{k-1}-(q^{u_2}-1)-(q^{u_3}-q) }{q-1}+m_{G_2}(\langle{\bf y}\rangle^{\bot})+ m_{G_3}(\langle{\bf y}\rangle^{\bot})-m_{G_1}(\langle{\bf y}\rangle^{\bot}).
\end{equation}

Since $\dim(U_i)=u_i$ and $\dim(U_i\bigcap \langle \mathbf{y}\rangle^{\bot})\in \{u_i-1,\,u_i\}$ for $1\leq i\leq s$,
we get that
 $$m_{G_i}(\langle \mathbf{y}\rangle^{\bot})\in \{\frac{q^{u_i-1}-1}{q-1},\,\frac{q^{u_i}-1}{q-1}\}$$
 for $1\leq i\leq 3.$ 
Note that $U_1= U_2\bigcap U_3.$
We proceed to prove that
in the following five cases.

\textbf{Case I:} If $\dim(U_1\bigcap \langle \mathbf{y}\rangle^{\bot})=0$, then 
$$\dim(U_2\bigcap \langle \mathbf{y}\rangle^{\bot})=u_2-1 \text{ and } \dim(U_3\bigcap \langle \mathbf{y}\rangle^{\bot})=u_3-1.$$
Hence $m_{G_2}( \langle \mathbf{y}\rangle^{\bot})+ m_{G_3}( \langle \mathbf{y}\rangle^{\bot})-m_{G_1}( \langle \mathbf{y}\rangle^{\bot})=\sum_{i=2}^3  \frac{q^{u_i-1}-1}{q-1}$
and $$  w({\bf c})=q^{k-1}-q^{u_2-1}-q^{u_3-1}+1$$
by Equality~(\ref{oo66}).
And we know that \begin{eqnarray*}
&&   |\{{\bf y}\in \mathbb{F}_q^k\,|\,  \mathbf{y}\neq \mathbf{0}\text{ and } U_1 \subsetneq\langle \mathbf{y}\rangle^{\bot}\}|\\                          
&=& q^k-1-  |\{{\bf y}\in \mathbb{F}_q^k\,|\,  \mathbf{y}\neq \mathbf{0}\text{ and } U_1 \subseteq \langle \mathbf{y}\rangle^{\bot}\}|\\
                          &=&  q^k-1- |\{{\bf y}\in \mathbb{F}_q^k\,|\,  \mathbf{y}\neq \mathbf{0}\text{ and }
                          y_1=0\}|\\
                          &=& q^{k}-q^{k-1}.
                        \end{eqnarray*}

The rest four cases are similar. Assume $d=q^{k-1}-q^{u_2-1}-q^{u_3-1}$, we list all the five cases as following:
{ \small $$\begin{array}{c|c|c|c}
&\text{condition of }\mathbf{y} &\text{weight of } \mathbf{c}& \text{multiplicity} \\\hline
                                         &\mathbf{y}=\mathbf{0}&0 & 1 \\\hline
                                        
                 \textbf{Case I}                         &\dim(U_1\bigcap \langle \mathbf{y}\rangle^{\bot})=0&d+1& q^{k}-q^{k-1} \\\hline

 &\dim(U_1\bigcap \langle \mathbf{y}\rangle^{\bot})=1&&\\
\textbf{Case II}&\dim(U_2\bigcap \langle \mathbf{y}\rangle^{\bot})=u_2-1       &d&q^{k-1}-q^{k-u_2}-q^{k-u_3}+q^{k-u_2-u_3+1}\\
&\dim(U_3\bigcap \langle \mathbf{y}\rangle^{\bot})=u_3-1&&\\
                                          \hline

&\dim(U_1\bigcap \langle \mathbf{y}\rangle^{\bot})=1&&\\
\textbf{Case III}&\dim(U_2\bigcap \langle \mathbf{y}\rangle^{\bot})=u_2 -1      &d+q^{u_3-1}&q^{k-u_3}-q^{k-u_2-u_3+1}\\
&\dim(U_3\bigcap \langle \mathbf{y}\rangle^{\bot})=u_3&&\\
                                          \hline           
                              
&\dim(U_1\bigcap \langle \mathbf{y}\rangle^{\bot})=1&&\\
\textbf{Case IV}&\dim(U_2\bigcap \langle \mathbf{y}\rangle^{\bot})=u_2       &d+q^{u_2-1}&q^{k-u_2}-q^{k-u_2-u_3+1}-1\\
&\dim(U_3\bigcap \langle \mathbf{y}\rangle^{\bot})=u_3-1&&\\
                                          \hline

\textbf{Case V}&\dim(U_2\bigcap \langle \mathbf{y}\rangle^{\bot})=u_2       &d+q^{u_2-1}+q^{u_3-1}&q^{k-u_2-u_3+1}-1\\
&\dim(U_3\bigcap \langle \mathbf{y}\rangle^{\bot})=u_3&&\\
                                          \hline
                                        \end{array}  .$$}

{\bf (c)} 
Assume $1=u_1\leq  r<u_2.$
Then $m_{G_i}(V^{\bot})\ge \frac{q^{u_i-r}-1}{q-1}$ for $2\leq i\leq 3$ and 
$$m_{G_2}(V^{\bot})+ m_{G_3}(V^{\bot})-m_{G_1}(V^{\bot})\ge \frac{q^{u_2-r}-1}{q-1}+\frac{q^{u_3-r}-1}{q-1}-1.$$

 Assume that $V_r$ is the linear subspace generated by $\{\mathbf{e}_{2},\,\mathbf{e}_{3},\cdots,\mathbf{e}_{r+1}\}.$
 Then $V_r^{\bot}$ is the linear subspace generated by $\{\mathbf{e}_{1},\,\mathbf{e}_{r+2},\mathbf{e}_{r+3},\cdots,\mathbf{e}_{k}\}.$
 Note that $$\dim(U_i\bigcap V_r^{\bot})= u_i-r  $$ for $2\leq i\leq 3$
 and $\dim(U_i\bigcap V_r^{\bot})=1$.
  Hence $$ m_{G_2}(V_r^{\bot}) +m_{G_3}(V_r^{\bot})-m_{G_1}(V_r^{\bot})= \frac{q^{u_2-r}-1}{q-1}+\frac{q^{u_3-r}-1}{q-1}-1.$$
 By Equality~(\ref{oo61}), we have that, for $1\leq r<u_2$, 
 \begin{eqnarray*}
   d_r(C) &=& \min \{w(U)\,\big|\,U\in {\rm SUB}^{r}(C)\}\\
    &=& \frac{q^{k}-q^{k-r}-(q^{u_2}-q^{u_2-r})-(q^{u_3}-q^{u_3-r})}{q-1}.
 \end{eqnarray*}

Analogously, we have that $$d_r(C) = \left\{
              \begin{array}{ll}
                \frac{q^{k}-q^{k-r}-(q^{u_2}-q^{u_2-r})-(q^{u_3}-q^{u_3-r})}{q-1}, & \hbox{$1\leq r<u_2$;}\\
                \frac{q^{k}-q^{k-r}-(q^{u_2}-1)-(q^{u_3}-q^{u_3-r})}{q-1}+1 , & \hbox{$ u_2\leq r< u_3$;}\\
\frac{q^{k}-q^{k-r}-(q^{u_2}-1)-(q^{u_3}-1)}{q-1} +1, & \hbox{$ u_3\leq r\leq  k$.}\\
              \end{array}
            \right.$$

{\bf (d)} Note that $A_j^{r}(C)=0$ for $1\leq j<d_r(C).$
Then we assume $d_r(C)\leq j\leq n.$
 We use Equality~(\ref{oo61}), i.e.,
\begin{equation*}
  w(U)=\frac{q^{k}-q^{k-r}}{q-1}-\frac{q^{u_2}-1}{q-1}-\frac{q^{u_3}-1}{q-1}+1+ m_{G_2}(V^{\bot})+m_{G_3}(V^{\bot})-m_{G_1}(V^{\bot}).
\end{equation*}
to determine the $r$-SSWD  of $C$.
Let $$m_{r,j}=j-\frac{q^{k}-q^{k-r}}{q-1}+\frac{q^{u_2}-1}{q-1}+\frac{q^{u_3}-1}{q-1}-1. $$
Then
\begin{eqnarray*}
  A_j^{r}(C) &=& |\{ U\in {\rm SUB}^{r}(C)\,|\,w(U)=j \}| \\
   &=& |\{ V\in {\rm SUB}^{r}(\mathbb{F}_q^k)\,|\,     m_{r,j}=m_{G_2}(V^{\bot})+
m_{G_3}(V^{\bot})-m_{G_1}(V^{\bot})\}|\\
&=& |\{ V\in {\rm SUB}^{k-r}(\mathbb{F}_q^k)\,|\,     m_{r,j}=m_{G_2}(V)+
m_{G_3}(V)-m_{G_1}(V)\}|.
\end{eqnarray*}
Since $m_{G_i}(V)=\frac{q^{\dim(U_i\bigcap V)}-1}{q-1}$ for $1\leq i\leq 3$, 
we have that 
$$A_j^{r}(C)=\sum_{(v_1, v_2, v_3)\in\mathbf{Z}(q, m_{r,j})  }M^{1, u_2, u_3,k}_{v_1, v_2, v_3,k-r},$$ where 
{ \small $$\mathbf{Z}(q, m_{r,j}) =\Biggm{\{}( v_1, v_2, v_3)\in \mathbb{Z}^{3}\,\Biggm{|}\,  \left\{
        \begin{array}{ll}
     \max\{0,\,u_i-r\}\leq v_i\leq \min\{u_i,\,k-r\} , & \hbox{for $ 1\leq i\leq 3$;} \\
      0\leq v_i-v_1\leq u_i-1, & \hbox{for $ 2\leq i\leq 3$;} \\
      v_2+v_3\leq k-r+v_1,& \hbox{ }\\
      m_{r,j}=\sum_{i=2}^{3} \frac{q^{v_i}-1}{q-1}-\frac{q^{v_1}-1}{q-1}. & \hbox{ }
                                                                       \end{array}
                                                                     \right.
\Biggm{\}}.$$ }

{\bf (e)} 
Assume $1\leq r<u_2.$ Then $d_r(C) = \frac{q^{k}-q^{k-r}}{q-1}-\sum_{i=2}^3\frac{q^{u_i}-q^{u_i-r}}{q-1}.$
By Lemma 18 of \cite{HLL},
we have that  \begin{eqnarray*}
        d_r(C)+\sum_{j=1}^{k-r}\lceil \frac{(q-1)d_r(C)}{q^j(q^r-1)}\rceil &=&  \frac{q^k-1}{q-1}-\sum_{i=2}^3\frac{q^{u_i}-1}{q-1}=n-1. \\
      \end{eqnarray*}

Assume $ r=u_2.$ Then $d_r(C) = \frac{q^{k}-q^{k-r}}{q-1}-\sum_{i=2}^3\frac{q^{u_i}-q^{u_i-r}}{q-1}+1$
and \begin{eqnarray*}
        d_r(C)+\sum_{j=1}^{k-r}\lceil \frac{(q-1)d_r(C)}{q^j(q^r-1)}\rceil &=&  \frac{q^k-1}{q-1}-\sum_{i=2}^3\frac{q^{u_i}-1}{q-1}+1=n. \\
      \end{eqnarray*}
Hence the linear code $C$ is a $u_2$-Griesmer code and an almost Griesmer code.

Suppose there exists an $[n, k,\bar{d}]_q$-linear code $\bar{C}$ such that $\bar{d}\ge d+1.$
Since $d$ is divisible by $q$, we have that $\lceil \frac{\bar{d}}{q}\rceil\ge  \lceil \frac{d+1}{q}\rceil=\lceil\frac{d}{q}\rceil+1.$
Then $$\sum_{i=0}^{k-1}\lceil \frac{\bar{d}}{q^i}\rceil \ge  \sum_{i=0}^{k-1}\lceil \frac{d}{q^i}\rceil+2=n+1,$$
which is a contradiction to the Griesmer bound for the minimum Hamming
distance of linear codes.
Hence the linear code $C$ is a distance-optimal code.

 \end{proof}

\begin{Remark}
Note that, if the set $\mathbf{Z}(q, m_{r,j})$ of Statement \textbf{(c)} in Theorem~\ref{xxd4} is empty, then $A_j^{r}(C) =0$.
When $q\ge 4$,  the size of the set $\mathbf{Z}(q, m_{r,j})$ is less than 2.
\end{Remark}

\begin{Corollary}\label{3}
Assume the notation is as given above.
For an integer $r$ with $1\leq r\leq k$,
then $$ |\{j\,|\, A_j^{r}(C) \neq 0\}| \leq 2\cdot\min\{ u_2u_3,\,(r+1)^2,\,(k-r+1)^2\}.$$ 
\end{Corollary}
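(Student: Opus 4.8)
The plan is to follow the same strategy as in the proofs of Corollary~\ref{2a} and Corollary~\ref{2}: bound the number of nonzero entries of the $r$-SSWD by the number of admissible index triples, and then estimate that number in three ways.

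First I would use Statement \textbf{(d)} of Theorem~\ref{xxd4}, which writes $A_j^{r}(C)$ as a sum of the nonnegative integers $M^{1, u_2, u_3,k}_{v_1, v_2, v_3,k-r}$ over $(v_1,v_2,v_3)\in\mathbf{Z}(q,m_{r,j})$. Hence $A_j^{r}(C)\neq 0$ forces $\mathbf{Z}(q,m_{r,j})\neq\emptyset$, so $|\{j\,|\,A_j^{r}(C)\neq 0\}|\leq|\{j\,|\,\mathbf{Z}(q,m_{r,j})\neq\emptyset\}|$. Since $j\mapsto m_{r,j}$ is injective and since every triple in $\mathbf{Z}(q,m_{r,j})$ satisfies $m_{r,j}=\tfrac{q^{v_2}-1}{q-1}+\tfrac{q^{v_3}-1}{q-1}-\tfrac{q^{v_1}-1}{q-1}$, picking one triple from each nonempty $\mathbf{Z}(q,m_{r,j})$ produces an injection from $\{j\,|\,\mathbf{Z}(q,m_{r,j})\neq\emptyset\}$ into the set $\mathcal{T}$ of all integer triples $(v_1,v_2,v_3)$ obeying the inequality constraints listed in the definition of $\mathbf{Z}(q,m_{r,j})$, because the displayed relation recovers $m_{r,j}$ from the triple. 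It therefore suffices to prove $|\mathcal{T}|\leq 2\min\{u_2u_3,(r+1)^2,(k-r+1)^2\}$.

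To bound $|\mathcal{T}|$ I would treat $v_1,v_2,v_3$ as ranging independently, noting that the extra relation $v_2+v_3\leq k-r+v_1$ only decreases the count. Because $u_1=1$, the constraint $\max\{0,1-r\}\leq v_1\leq\min\{1,k-r\}$ allows at most two values of $v_1$. For the first estimate I use $0\leq v_i-v_1\leq u_i-1$, which for fixed $v_1$ leaves exactly $u_i$ choices of $v_i$, giving $|\mathcal{T}|\leq 2u_2u_3$. For the other two I use $\max\{0,u_i-r\}\leq v_i\leq\min\{u_i,k-r\}$: the number of integers in this interval is $\min\{u_i,k-r\}-\max\{0,u_i-r\}+1$, which is at most $r+1$ and at most $k-r+1$ (checked by splitting on whether $u_i\leq r$), and for $r\geq 1$ the number of choices of $v_1$ is likewise $\leq r+1$ and $\leq k-r+1$, with the degenerate case $r=k$ (where $v_1=0$ is forced) handled separately. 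Multiplying the three per-coordinate bounds yields $|\mathcal{T}|\leq 2(r+1)^2$ and $|\mathcal{T}|\leq 2(k-r+1)^2$, and taking the minimum of the three bounds completes the argument. The only point requiring real attention is the elementary interval-length bookkeeping for $v_2$ and $v_3$ together with the boundary case $r=k$; everything else transcribes directly from the proofs of Corollary~\ref{2a} and Corollary~\ref{2}.
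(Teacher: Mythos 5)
Your proposal is correct and follows essentially the same route as the paper: bound the number of nonzero $A_j^{r}(C)$ by the number of admissible triples $(v_1,v_2,v_3)$ from Theorem~\ref{xxd4}\textbf{(d)}, note $v_1$ has at most two values since $u_1=1$, and count $v_2,v_3$ either via $0\leq v_i-v_1\leq u_i-1$ or via $\max\{0,u_i-r\}\leq v_i\leq\min\{u_i,k-r\}$ to get the three bounds $2u_2u_3$, $2(r+1)^2$, $2(k-r+1)^2$. Your version merely spells out details the paper leaves implicit (the injection from weights $j$ to triples via $m_{r,j}$, and the interval-length estimates), so no substantive difference.
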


\begin{proof}
By Theorem~\ref{xxd4}, we know that
$$ |\{j\,|\, A_j^{r}(C) \neq 0\}| \leq |\{\mathbf{Z}(q, m_{r,j})\,|\, \mathbf{Z}(q, m_{r,j})\neq \emptyset\}|.$$
Then we enumerate all the integers $v_1$, $v_2$ and $v_3$ satisfying $0\leq v_1\leq 1$, 
$$ \max\{0,\,u_i-r\}\leq v_i\leq \min\{u_i,\,k-r\}$$ for $ 2\leq i\leq 3$ and 
      $0\leq v_i-v_1\leq u_i-1$ for $ 2\leq i\leq 3.$
Hence we have that $$ |\{j\,|\, A_j^{r}(C) \neq 0\}| \leq 2\cdot\min\{ u_2u_3,\,(r+1)^2,\,(k-r+1)^2\} .$$
\end{proof}

By Corollary~\ref{3}, we know that the linear codes constructed in Theorem~\ref{xxd4} have few subcode support weights with dimension $r$, when $\min\{ u_2u_3,\,(r+1)^2,\,(k-r+1)^2\}$ is small.

\begin{Example}\label{ee12}
Assume $q=2,$ $s=3$ and $k=5$ in Theorem~\ref{xxd4}.
Let  
{ \tiny $$S_{2,5} =\left(\begin{array}{ccccc ccccc ccccc ccccc ccccc ccccc c}
      1&0&1&0&1&0&1&0&1&0&1&0&1&0&1&0&1&0&1&0&1&0&1&0&1&0&1&0&1&0&1\\
      0&1&1&0&0&1&1&0&0&1&1&0&0&1&1&0&0&1&1&0&0&1&1&0&0&1&1&0&0&1&1 \\
      0&0&0&1&1&1&1&0&0&0&0&1&1&1&1&0&0&0&0&1&1&1&1&0&0&0&0&1&1&1&1\\
      0&0&0&0&0&0&0&1&1&1&1&1&1&1&1&0&0&0&0&0&0&0&0&1&1&1&1&1&1&1&1\\
      0&0&0&0&0&0&0&0&0&0&0&0&0&0&0&1&1&1&1&1&1&1&1&1&1&1&1&1&1&1&1
\end{array}\right)  ,$$

$$G_1=\left(\begin{array}{c}
      1\\
      0 \\
      0\\
      0\\
      0
\end{array}\right) ,\,\,
G_2=\left(\begin{array}{cc cccc}
      1&0&1\\
      0&1&1 \\
      0&0&0\\
      0&0&0\\
      0&0&0
\end{array}\right)\,\, ,\,\,G_3=\left(\begin{array}{cccc ccc}
      1&0&1&0&1&0&1\\
      0&0&0 &0&0&0&0\\
      0&0&0&0&0&0&0\\
      0&1&1&0&1&0&1\\
      0&0&0&1&1&1&1
\end{array}\right)  $$}
{ \tiny  $$ \text{ and } G_4=\left(\begin{array}{ccccc ccccc ccccc}
1&1&1&1&1&1&1&1       &0&0&0&0&0&0&0\\
0&0&0&0&0&0&0&0   &0&0&0&0&0&0&0\\
0&0&0&0&1&1&1&1     &0&0&0&1&1&1&1\\
0&0&1&1&0&0&1&1     &0&1&1&0&0&1&1\\
0&1&0&1&0&1&0&1     &1&0&1&0&1&0&1
\end{array}\right)    .$$ }
Let $C_1$ be the $[22,5,10]_2$-linear code with the generator matrix $[S_{2,5}\backslash [G_2,G_3\backslash G_1]],$ which is on the condition of $u_2=2$ and $u_3=3$ in Theorem~\ref{xxd4}.
 And let $C_2$ be the $[14,5,6]_2$-linear code with the generator matrix $[S_{2,5}\backslash [G_2,G_4\backslash G_1]],$ which is on the condition of $u_2=2$ and $u_3=4$ in Theorem~\ref{xxd4}.
The linear codes $C_1$ and $C_2$ are  $2$-Griesmer codes and  almost Griesmer codes.
Based on the tables in \cite{G}, linear codes $C_1$ and $C_2$ are distance-optimal codes.
By Magma and Theorem~\ref{xxd4}, the subcode support weight distributions of $C_1$ and $C_2$ are  listed in Table~\ref{t2}.

\begin{table}[htbp]
\caption{Parameters of  $C_1$ and $C_2$ in Example~\ref{ee12}}
\label{t2}
\center
{ \small
\begin{tabular}{c |c}

parameters of $C_1$&parameters of $C_2$\\ 
$[22,5,10]_2$&$[14,5,6]_2$ \\ \hline
$1$-SSWD of $C_1$&$1$-SSWD of $C_2$\\
\{[10,6],[11,16],[12,6],[14,2],[16,1]\}&\{[6,6],[7,16],[8,7],[14,1]\}\\ \hline
$2$-SSWD of $C_1$&$2$-SSWD of $C_2$\\
\{[16,60],[17,48],[18,35],[19,8],[20,3],[22,1]\}& \{[10,77],[11,56],[12,7],[14,15]\}\\ \hline
$3$-SSWD of $C_1$&$3$-SSWD of $C_2$\\
 \{[19,48],[20,87],[21,12],[22,8]\}& \{[12,91],[13,28],[14,36]\}\\\hline
$4$-SSWD of $C_1$&$4$-SSWD of $C_2$\\
 \{[21,22],[22,9]\}& \{[13,14],[14,17]\}\\

\end{tabular}
}
\end{table}

\end{Example}

\section{Simplex complement codes of GRS codes}
In this section, we use simplex complement codes of GRS codes to construct a family of $2$-Griesmer codes.
Recall $w(U)$ is the subcode support weight of a subspace $U$.
And the sequence $$[A_1^{r}(C), A_2^{r}(C),\cdots, A_n^{r}(C)]$$ is the {\it $r$-SSWD} of an $[n,k]_q$-linear code $C$.
Assume $S_{q,k}$ is the $k\times \frac{q^k-1}{q-1}$ matrix over $\mathbb{F}_q$ defined as in Section 3.
Let $\gamma_1, \gamma_2,\cdots, \gamma_m$ be $m$ distinct elements of $\mathbb{F}_q$, where $k\leq m\leq q.$ 
Assume  $\alpha_i^T=(\gamma^{k-1}_i, \gamma^{k-2}_i,\cdots ,\gamma_i,1)$ for $1\leq i\leq n$ and 
$$G(\gamma_1, \gamma_2,\cdots, \gamma_m)=[\alpha_1,\alpha_2,\cdots , \alpha_m]$$ is the $k\times m$ matrix defined by $\gamma_1, \gamma_2,\cdots, \gamma_m$.
We can assume that the columns of $G(\gamma_1, \gamma_2,\cdots, \gamma_m)$ are contained in those columns of $S_{q,k}.$


\begin{Theorem}\label{chen}
Assume the notation is as given above and $3\leq k\leq m\leq q$.
Let $\tilde{C}$ be the $[m,k]_q$-linear code generated by $G(\gamma_1, \gamma_2,\cdots, \gamma_m)$, and let $C$ be the $[\frac{q^k-1}{q-1}-m,k]_q$-linear code generated by $S_{q,k}\backslash G(\gamma_1, \gamma_2,\cdots, \gamma_m)$.  Then 
\begin{description}
  \item[(a)]  $d_r(C)=\frac{q^{k}-q^{k-r}}{q-1}-m $ for $1\leq r\leq k.$
  \item[(b)] For  $1\leq r\leq k,$  the $r$-SSWD  of $C$ satisfies
  $$A_j^{r}(C)=\left\{
                     \begin{array}{ll}
                       0, & \hbox{$0\leq j<  \frac{q^{k}-q^r}{q-1}-m$ ;} \\
                       A_{\frac{q^{k}-q^r}{q-1}-j}^{r}(\tilde{C}), & \hbox{$\frac{q^{k}-q^r}{q-1}-m\leq j\leq\frac{q^{k}-q^r}{q-1}-m+k-r  $;}\\
0, & \hbox{$  \frac{q^{k}-q^r}{q-1}-m+k-1   <j\leq \frac{q^{k}-1}{q-1}$ .} 
                     \end{array}
                   \right. $$
  \item[(c)]  If $m<q$, then $C$ is  a Griesmer code. If $m=q$, then $C$ is a $2$-Griesmer code and an almost Griesmer code.
  \item[(d)] The weight distribution  of $C$ is 
$$A_j(C)=\binom{m}{q^{k-1}-j}\sum_{t=0}^{q^{k-1}+k-j-m-1}(-1)^t\binom{q^{k-1}-j}{t}(q^{q^{k-1}+k-j-m-t}-1)$$
 for $q^{k-1}-m\leq j\leq q^{k-1}+k-m-1,$ otherwise $A_j(C)=0$.
\end{description}

\end{Theorem}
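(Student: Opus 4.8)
The idea is to transport every statement from the (generalized) Reed--Solomon code $\tilde C$ to $C$, using that the generator of $C$ is obtained by deleting the $m$ columns $\alpha_1,\dots,\alpha_m$ of $G(\gamma_1,\dots,\gamma_m)$ from the full simplex generator $S_{q,k}$. I would first record two facts. (i) Since $k\le m\le q$, the $\gamma_i$ are distinct and every $k$ of the $\alpha_i$ form an invertible Vandermonde matrix; hence $\tilde C$ is an $[m,k,m-k+1]_q$ MDS code (with its classical weight distribution), and for any $W\in{\rm SUB}(\mathbb{F}_q^k)$ with $\dim W<k$ one has $0\le m_{G(\gamma_1,\dots,\gamma_m)}(W)\le\dim W$, every value being attained by a suitable $W$ of a given dimension. (Here $\alpha_i\in W$ means exactly that $\gamma_i$ is a common root of all degree$-{<}k$ polynomials whose coefficient vector lies in $W^{\perp}$, so the bound is just that a nonzero such polynomial has at most $k-1$ roots, together with injectivity of $v\mapsto$ its polynomial.) (ii) $m_{S_{q,k}}(W)=\frac{q^{\dim W}-1}{q-1}$, the number of projective points of $W$; combined with the additivity $m_{S_{q,k}}(W)=m_{G}(W)+m_{G(\gamma_1,\dots,\gamma_m)}(W)$ for $G=S_{q,k}\backslash G(\gamma_1,\dots,\gamma_m)$, this gives $m_G(W)=\frac{q^{\dim W}-1}{q-1}-m_{G(\gamma_1,\dots,\gamma_m)}(W)$.

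Fix $1\le r\le k$ and $V\in{\rm SUB}^r(\mathbb{F}_q^k)$, and let $U\subseteq C$ and $\tilde U\subseteq\tilde C$ be the $r$-dimensional subcodes attached to $V$ by Lemma~\ref{weight}. Combining Lemma~\ref{weight} with (ii),
$$w(U)=\Big(\tfrac{q^k-1}{q-1}-m\Big)-m_G(V^{\perp})=\frac{q^k-q^{k-r}}{q-1}-m+m_{G(\gamma_1,\dots,\gamma_m)}(V^{\perp}),\qquad w(\tilde U)=m-m_{G(\gamma_1,\dots,\gamma_m)}(V^{\perp}),$$
hence the key identity $w(U)=\frac{q^k-q^{k-r}}{q-1}-w(\tilde U)$, valid for the same $V$ on both sides. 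Item (b) is then immediate: the nonzero part of the $r$-SSWD of $C$ is the nonzero part of that of $\tilde C$ read backwards through this affine change of the weight variable, and since $m_{G(\gamma_1,\dots,\gamma_m)}(V^{\perp})$ runs exactly through $\{0,1,\dots,k-r\}$ by (i), the supported weights of $C$ are exactly $d_r(C),d_r(C)+1,\dots,d_r(C)+k-r$. For (a), minimize the first display: $m_{G(\gamma_1,\dots,\gamma_m)}(V^{\perp})=0$ is attainable since for $r\ge1$ we may take $V\ni(0,\dots,0,1)$, and then no $\alpha_i$ lies in $V^{\perp}$ because every $\alpha_i$ has last coordinate $1$; thus $d_r(C)=\frac{q^k-q^{k-r}}{q-1}-m$. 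For (d), set $r=1$ in (b) and multiply by $q-1$: $A_0(C)=1$ and $A_j(C)=A_{q^{k-1}-j}(\tilde C)$ for $j\ge1$; substituting $w=q^{k-1}-j$ into the MDS weight distribution $A_w(\tilde C)=\binom{m}{w}\sum_{t\ge0}(-1)^t\binom{w}{t}(q^{\,w-m+k-t}-1)$ (nonzero for $m-k+1\le w\le m$) yields the displayed formula on $q^{k-1}-m\le j\le q^{k-1}-m+k-1$.

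It remains to prove (c). With $d=d_1(C)=q^{k-1}-m$: if $m<q$ then $q^i>m$ for all $i\ge1$, so $\lceil d/q^i\rceil=q^{k-1-i}$ and $d+\sum_{i=1}^{k-1}\lceil d/q^i\rceil=q^{k-1}-m+\tfrac{q^{k-1}-1}{q-1}=\tfrac{q^k-1}{q-1}-m=n$, whence $\delta_1(C)=0$ and $C$ is a Griesmer code. If $m=q$ the same computation gives $\lceil d/q\rceil=q^{k-2}-1$, hence $d+\sum_{i=1}^{k-1}\lceil d/q^i\rceil=n-1$, so $\delta_1(C)=1$ (almost Griesmer, not Griesmer); then $\delta_2(C)=0$ follows from an equally direct evaluation of the generalized Griesmer bound at $r=2$ with $d_2(C)=\frac{q^k-q^{k-2}}{q-1}-q$ (all the ceilings are again exact powers of $q$), or from Lemma~18 of \cite{HLL} exactly as in the proof of Theorem~\ref{xxd4}, so $C$ is a $2$-Griesmer code.

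The part I expect to require the most care is fact (i): that an $\ell$-dimensional subspace contains at most $\ell$ of the evaluation columns $\alpha_i$, and that all counts $0,1,\dots,k-r$ are realized among $(k-r)$-dimensional subspaces. This is precisely the ``MDS'' input, and it is what simultaneously pins down the support range in (b), the value of $d_r(C)$ in (a) and the legitimacy of transporting the MDS weight enumerator in (d); once it and the identity $w(U)=\frac{q^k-q^{k-r}}{q-1}-w(\tilde U)$ are in place, items (a), (b), (d) are pure bookkeeping and (c) is a one-line Griesmer-bound check.
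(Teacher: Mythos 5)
Your proposal is correct and takes essentially the same route as the paper: the complementation identity $w(U)=\frac{q^{k}-q^{k-r}}{q-1}-w(\tilde U)$ inside the simplex code (which the paper packages as the isomorphism $\psi$ with $[\psi(\mathbf{c}),\mathbf{c}]\in\mathfrak{S}_{q,k}$ and you obtain equivalently via Lemma~\ref{weight} and column counting), followed by the same use of the MDS property of $\tilde C$ for (a)--(b), the same direct evaluations of the Griesmer bounds at $r=1,2$ for (c), and the same transport of the MDS weight enumerator for (d). Note also that your index $\frac{q^{k}-q^{k-r}}{q-1}-j$ agrees with what the paper's own proof derives; the $\frac{q^{k}-q^{r}}{q-1}$ in the theorem statement is an internal typo there, not a defect of your argument.
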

\begin{proof}
{\bf (a)-\bf (b)}
Note that $\tilde{C}$ is an MDS code.
Let $\mathfrak{S}_{q,k}$ be the $[\frac{q^k-1}{q-1},k]_q$-linear code generated by $S_{q,k}$.
And we know that the support weight of any subspace with dimension $r$ of $\mathfrak{S}_{q,k}$ is $\frac{q^{k}-q^{k-r}}{q-1}$ and $$d_r(\mathfrak{S}_{q,k})=\frac{q^{k}-q^{k-r}}{q-1}$$ for $1\leq r\leq k.$

By the definition of  $G(\gamma_1, \gamma_2,\cdots, \gamma_m)$ and $S_{q,k}\backslash G(\gamma_1, \gamma_2,\cdots, \gamma_m)$,
there exists  a linear isomorphism $\psi$ from $C$ to $\tilde{C}$ such that $[\psi(\mathbf{c}),\mathbf{c}]\in \mathfrak{S}_{q,k}$ for any $\mathbf{c}\in C.$
If $\mathbf{c}\neq \mathbf{0}$, then $\psi(\mathbf{c})\neq \mathbf{0}$ and $w(\mathbf{c})=q^{k-1}-w(\psi(\mathbf{c}))$. 
Hence  \begin{eqnarray*}
  d_1(C) &=& q^{k-1}-\max\{w(\psi(\mathbf{c}))\,|\, \mathbf{0} \neq \mathbf{c}\in C\} \\
   &=&q^{k-1}-\max\{w(\tilde{\mathbf{c}})\,|\,\mathbf{0} \neq\tilde{\mathbf{c}}\in \tilde{C} \}\\
   &=&q^{k-1}-m .
\end{eqnarray*}

For any subspace $U\in {\rm SUB}^{r}(C)$, we know that 
$$w(U)=\frac{q^{k}-q^{k-r}}{q-1} - w(\psi(U))$$
and \begin{eqnarray*}
  d_r(C) &=& \frac{q^{k}-q^{k-r}}{q-1}-\max\{w(\psi(U))\,|\, U\in {\rm SUB}^{r}(C)\} \\
   &=&\frac{q^{k}-q^{k-r}}{q-1}-\max\{w(V)\,|\,V\in {\rm SUB}^{r}(\tilde{C})\}\\
   &=&\frac{q^{k}-q^{k-r}}{q-1}-m .
\end{eqnarray*}

Hence, for  $1\leq r\leq k,$  $$A_j^{r}(C)=\left\{
                     \begin{array}{ll}
                       0, & \hbox{$0\leq j<  \frac{q^{k}-q^r}{q-1}-m$ ;} \\
                       A_{\frac{q^{k}-q^r}{q-1}-j}^{r}(\tilde{C}), & \hbox{$\frac{q^{k}-q^r}{q-1}-m\leq j\leq\frac{q^{k}-q^r}{q-1}-m+k-r  $;}\\
0, & \hbox{$  \frac{q^{k}-q^r}{q-1}-m+k-1   <j\leq \frac{q^{k}-1}{q-1}$ .} 
                     \end{array}
                   \right. $$

{\bf (c)} 
Since $m\leq q,$ we know that
\begin{eqnarray*}
  d_2(C)+\sum_{i=1}^{k-2}\lceil \frac{(q-1)d_2(C)}{q^i(q^2-1)}\rceil&=&  \frac{q^{k}-q^{k-2}}{q-1}-m+\sum_{i=1}^{k-2}\lceil \frac{q^{k}-q^{k-2}-(q-1)m}{q^i(q^2-1)}\rceil\\
   &=&  \frac{q^{k}-q^{k-2}}{q-1}-m+\sum_{i=1}^{k-2}\lceil q^{k-2-i}-\frac{m}{q^i(q+1)}\rceil \\
    &=&  \frac{q^{k}-q^{k-2}}{q-1}-m+\sum_{i=1}^{k-2}q^{k-2-i} \\
     &=&  \frac{q^{k}-1}{q-1}-m ,
\end{eqnarray*}
which is the length of $C$.

Note that
\begin{eqnarray*}
  \sum_{i=0}^{k-1}\lceil \frac{d_1(C)}{q^i}\rceil&=&  q^{k-1}-m+\sum_{i=1}^{k-1}\lceil q^{k-1-i} -\frac{m}{q^i}\rceil.
\end{eqnarray*}
If $m<q$, then $$\sum_{i=0}^{k-1}\lceil \frac{d_1(C)}{q^i}\rceil=\frac{q^{k}-1}{q-1}-m,$$ which is the length of $C$.
Then $C$ is  a Griesmer code.

If $m=q$, then $$\sum_{i=0}^{k-1}\lceil \frac{d_1(C)}{q^i}\rceil=\frac{q^{k}-1}{q-1}-m-1,$$ which is less than the length of $C$.
Then $C$ is  a $2$-Griesmer code and an almost Griesmer code.

{\bf (d)} Note that $\tilde{C}$ is an MDS code. The weight distribution of MDS codes is provide in Theorem 7.4.1 of \cite{HP}.
Then $A_j(C)=0$ for $1\leq j\leq q^{k-1}-m-1$ or $q^{k-1}+k-m\leq j\leq \frac{q^{k}-1}{q-1}-m.$
And
$$A_j(C)=A_{q^{k-1}-j}(\tilde{C})
=\binom{m}{q^{k-1}-j}\sum_{t=0}^{q^{k-1}+k-j-m-1}(-1)^t\binom{q^{k-1}-j}{t}(q^{q^{k-1}+k-j-m-t}-1)$$
 for $q^{k-1}-m\leq j\leq q^{k-1}+k-m-1.$

\end{proof}

In the following corollary, we determine some subcode support weight distributions of simplex complement codes of GRS codes.
\begin{Corollary}\label{rr3}
Assume the notation is as given above and $3\leq k\leq m\leq q$.
Let $\tilde{C}$ be the $[m,k]_q$-linear code generated by $G(\gamma_1, \gamma_2,\cdots, \gamma_m)$, and let $C$ be the $[n,k]_q$-linear code generated by $S_{q,k}\backslash G(\gamma_1, \gamma_2,\cdots, \gamma_m)$, where $n=\frac{q^k-1}{q-1}-m$. 
\begin{description}
  \item[(a)]  If $r=k-1$, then 
$A_j^{k-1}(C)=\left\{
                              \begin{array}{ll}
                                m, & \hbox{$j=n$;} \\
                                \frac{q^k-1}{q-1}-m, & \hbox{$j=n-1$;} \\
                                0, & \hbox{  $1\leq j\leq n-1$.}
                              \end{array}
                            \right.$ 
  \item[(b)] If $r=k-2$, then $$A_j^{k-2}(C)=\left\{
                              \begin{array}{ll}
                                \frac{(q^k-1)(q^k-q)}{(q^2-1)(q^2-q)}-m\frac{q^{k-1}-1}{q-1}+\binom{m}{2}, & \hbox{$j=q^{k-2}(q+1)-m$;} \\
m(\frac{q^{k-1}-1}{q-1}-m+1), & \hbox{$j=q^{k-2}(q+1)-m+1$;} \\
\binom{m}{2}, & \hbox{$j=q^{k-2}(q+1)-m+2$;} \\
                                0, & \hbox{  otherwise.}
                              \end{array}
                            \right.$$
\end{description}
\end{Corollary}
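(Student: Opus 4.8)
The plan is to specialize the general formulas of Theorem~\ref{chen}(b) together with the known $r$-SSWD of the MDS code $\tilde C$, and then translate everything back to $C$ via the support-weight complement identity. First I would recall from Theorem~\ref{chen}(b) that for $1\le r\le k$ we have
$$A_j^{r}(C)=A_{\frac{q^{k}-q^r}{q-1}-j}^{r}(\tilde C)$$
whenever $j$ lies in the admissible window $\frac{q^{k}-q^r}{q-1}-m\le j\le \frac{q^{k}-q^r}{q-1}-m+k-r$, and $A_j^r(C)=0$ otherwise. So the corollary is equivalent to knowing $A_i^{k-1}(\tilde C)$ and $A_i^{k-2}(\tilde C)$ for small $i$. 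Since $\tilde C$ is an $[m,k]_q$ MDS code, a subspace $U\in{\rm SUB}^{r}(\tilde C)$ has $d_r(\tilde C)=m-k+r$ (the generalized Singleton bound, met with equality by MDS codes), and the possible support weights of an $r$-dimensional subcode are exactly $m-k+r,\,m-k+r+1,\dots,m$; equivalently the complementary quantity $w(\psi(U))$ ranges over the top $k-r+1$ values. Writing $i=m-w$ with $0\le i\le k-r$, I would compute $A^{r}_{m-i}(\tilde C)$, the number of $r$-subspaces of $\tilde C$ whose support misses exactly $i$ coordinates.

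For $r=k-1$ this is immediate: a $(k-1)$-dimensional subcode of an MDS code corresponds (via Lemma~\ref{weight}, i.e.\ $U=\{\mathbf yG : \mathbf y\in V\}$ with $V\in{\rm SUB}^{k-1}(\mathbb F_q^k)$) to a hyperplane $V\subseteq\mathbb F_q^k$, and the support of $U$ misses coordinate $\ell$ iff the $\ell$-th column $\alpha_\ell$ of $G(\gamma_1,\dots,\gamma_m)$ lies in $V^{\bot}$, a line. Since the columns $\alpha_\ell$ are pairwise independent (the Vandermonde/GRS structure), at most one column lies on any given line, so a hyperplane $V$ either contains no $\alpha_\ell$ (support weight $m$, and the number of such hyperplanes is $\frac{q^k-1}{q-1}-m$) or exactly one (support weight $m-1$, with $m$ such hyperplanes). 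Translating $w(U)=m$ to $w_C=n$ and $w(U)=m-1$ to $w_C=n-1$ gives part (a). The main work is part (b): for $r=k-2$ I would again parametrize by $V\in{\rm SUB}^{k-2}(\mathbb F_q^k)$, so $V^{\bot}$ is a $2$-dimensional subspace (a projective line with $q+1$ points), and the support weight of $U$ equals $m-m_G(V^{\bot})$ where $m_G(V^{\bot})$ counts columns $\alpha_\ell$ lying in that plane. Because the $\alpha_\ell$ are pairwise independent, each contributes at most one to $m_G(V^{\bot})$, so $m_G(V^{\bot})\in\{0,1,2\}$, giving exactly the three support weights $m,m-1,m-2$, i.e.\ $w_C\in\{n,n-1,n-2\}$ shifted to $q^{k-2}(q+1)-m+\{0,1,2\}$ after adding $d_{k-2}=\frac{q^k-q^2}{q-1}-m$. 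Counting $2$-subspaces of $\mathbb F_q^k$ containing exactly $2$, exactly $1$, or $0$ of the $m$ points $\langle\alpha_\ell\rangle$ is a straightforward inclusion–exclusion: there are $\binom{m}{2}$ planes through two of the points (two distinct points determine a unique plane, using pairwise independence); the number through a fixed point $\langle\alpha_\ell\rangle$ is $\frac{q^{k-1}-1}{q-1}$, so the number through exactly one point is $m\bigl(\frac{q^{k-1}-1}{q-1}-(m-1)\bigr)=m\bigl(\frac{q^{k-1}-1}{q-1}-m+1\bigr)$; and the number through none is $\dst{k}{2}_q-m\bigl(\frac{q^{k-1}-1}{q-1}-m+1\bigr)-\binom{m}{2}=\frac{(q^k-1)(q^k-q)}{(q^2-1)(q^2-q)}-m\frac{q^{k-1}-1}{q-1}+\binom{m}{2}$ after simplification.

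The only subtlety — and the step I expect to be the main obstacle — is checking that no three of the points $\langle\alpha_1\rangle,\dots,\langle\alpha_m\rangle$ are collinear in ${\rm PG}(k-1,q)$, which is exactly what guarantees $m_G(V^{\bot})\le 2$ for every $2$-subspace $V^{\bot}$ and makes the inclusion–exclusion count clean. This is precisely the MDS/arc property of the normalized Vandermonde columns $\alpha_i^T=(\gamma_i^{k-1},\dots,\gamma_i,1)$: any $k$ of them are linearly independent because the associated Vandermonde matrix is invertible for distinct $\gamma_i$, and in particular for $k\ge 3$ any $3$ are independent, hence not collinear. (This is the same fact used in Theorem~\ref{chen}(a),(d) to call $\tilde C$ an MDS code.) Once that observation is in place, parts (a) and (b) follow by assembling the pieces above and reading off the support-weight ranges from Theorem~\ref{chen}(b) with $r=k-1$ and $r=k-2$; the edge cases of the window ($j=n$ down to $j=n-(k-r)$) match because $k-r=1$ and $k-r=2$ respectively, so the stated two, resp.\ three, nonzero values exhaust the whole window.
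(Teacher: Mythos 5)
Your overall route is the same as the paper's: compute the $(k-1)$- and $(k-2)$-subcode support weights of the MDS code $\tilde C$ by counting projective lines and planes through the arc points $\langle\alpha_1\rangle,\dots,\langle\alpha_m\rangle$ (no three collinear, by the Vandermonde property), then transfer to $C$ by the complementation coming from the simplex code, which is exactly Lemma~\ref{weight} plus the relation in Theorem~\ref{chen}; and your inclusion--exclusion in (b) reproduces the paper's three multiplicities. However, there is a concrete error in your part (a). The complement identity is $w(U)+w(\psi(U))=\frac{q^{k}-q^{k-r}}{q-1}$, which for $r=k-1$ equals $\frac{q^{k}-q}{q-1}=n+m-1$, not $n+m$. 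Hence a hyperplane $V$ whose perpendicular line $V^{\perp}$ contains no $\alpha_\ell$ (so $w(\psi(U))=m$) gives $w(U)=n-1$, while $V^{\perp}=\langle\alpha_\ell\rangle$ (so $w(\psi(U))=m-1$) gives $w(U)=n$. Your sentence ``translating $w(U)=m$ to $w_C=n$ and $w(U)=m-1$ to $w_C=n-1$'' is backwards, and as written it yields $A_n^{k-1}(C)=\frac{q^k-1}{q-1}-m$ and $A_{n-1}^{k-1}(C)=m$, i.e.\ the two multiplicities of statement (a) swapped; this contradicts the corollary and, for instance, the $[10,3,6]_3$ code of Example~\ref{ee132}, whose $2$-SSWD is $\{[9,10],[10,3]\}$. (The quickest repair, and the paper's actual argument for (a), is to apply Lemma~\ref{weight} directly to $C$: $w(U)=n-m_{S_{q,k}\backslash G}(V^{\perp})$, so $w(U)=n$ exactly when the single point $V^{\perp}$ is one of the $m$ points $\langle\alpha_\ell\rangle$.)

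The same slip surfaces briefly in (b) when you write ``$w_C\in\{n,n-1,n-2\}$''; in fact $w_C=\frac{q^{k}-q^{2}}{q-1}-m+\#\{\ell\,:\,\alpha_\ell\in V^{\perp}\}$, so the three weights are $d_{k-2}(C)+\{0,1,2\}$ (as you then correctly state), and more arc points in the plane $V^{\perp}$ means a \emph{larger} weight in $C$: the count of planes through no arc point goes with the smallest weight and $\binom{m}{2}$ with the largest, which is the assignment in the statement. With the complementation applied in this direction, the rest of your argument --- the arc property of the Vandermonde columns guaranteeing $\#\{\alpha_\ell\in V^{\perp}\}\le 2$, and the inclusion--exclusion giving $\binom{m}{2}$, $m\bigl(\frac{q^{k-1}-1}{q-1}-m+1\bigr)$ and $\dst{k}{2}_q-m\frac{q^{k-1}-1}{q-1}+\binom{m}{2}$ --- is precisely the content of the paper's proof, which computes $A_j^{k-2}(\tilde C)$ via Lemma~\ref{weight} and then invokes the relation of Theorem~\ref{chen}. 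So the approach and the counting are right, but the bookkeeping of the weight complement in (a) must be corrected before the proof is valid.
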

\begin{proof}
Statements {\bf(a)} is a direct result of Lemma~\ref{weight}. 

{\bf(b)}
Let $\tilde{C}$ be the $[m,k]_q$-linear code generated by $G(\gamma_1, \gamma_2,\cdots, \gamma_m).$
By Lemma~\ref{weight}, we know that $$A_j^{k-2}(\tilde{C})=\left\{
                              \begin{array}{ll}
                                   \frac{(q^k-1)(q^k-q)}{(q^2-1)(q^2-q)}-m\frac{q^{k-1}-1}{q-1}+\binom{m}{2}, & \hbox{$j=m$;} \\
m(\frac{q^{k-1}-1}{q-1}-m+1), & \hbox{$j=m-1$;} \\
\binom{m}{2}, & \hbox{$j=m-2$;} \\
                                0, & \hbox{  $1\leq j\leq m-3$.}
                              \end{array}
                            \right.$$
By Statement {\bf(c)} of Theorem~\ref{chen}, we have that 
$$A_j^{k-2}(C)=A_{q^{k-2}(q-1)-j}^{k-2}(\tilde{C})=\left\{
                              \begin{array}{ll}
                                \frac{(q^k-1)(q^k-q)}{(q^2-1)(q^2-q)}-m\frac{q^{k-1}-1}{q-1}+\binom{m}{2}, & \hbox{$j=q^{k-2}(q-1)-m$;} \\
m(\frac{q^{k-1}-1}{q-1}-m+1), & \hbox{$j=q^{k-2}(q-1)-m+1$;} \\
\binom{m}{2}, & \hbox{$j=q^{k-2}(q-1)-m+2$;} \\
                                0, & \hbox{  otherwise.}
                              \end{array}
                            \right.$$

\end{proof}

\begin{Example}\label{ee132}
Let $C_1$ be the $[10,3,6]_3$-linear code with the generator matrix 
$$S_{3,3}\backslash G(0, 1,2)=\left(\begin{array}{ccccc ccccc }
1&0&1&2& 1&2& 0&2 & 0&2\\
0&1&1&1& 0&0& 1&1 & 2&2\\
0&0&0&0&1&1& 1&1  & 1&1
\end{array}\right) .$$
Let $C_2$ be the $[151,4,120]_5$-linear code with the generator matrix $S_{5,4}\backslash G(0, 1,2,3,4).$
Then $C_1$ and $C_2$ are $2$-Griesmer codes and almost Griesmer codes.
By Magma and Corollary~\ref{rr3}, the subcode support weight distributions of $C_1$ and $C_2$ are listed in Table~\ref{t6}.

\begin{table}[htbp]
\caption{Parameters of  $C_1$ and $C_2$ in Example~\ref{ee132}}
\label{t6}
\center

{\small
\begin{tabular}{c|c }
parameters of $C_1$&parameters of $C_2$\\
$[10,3,6]_3$  &$[151,4,120]_5$\\\hline
$1$-SSWD of $C_1$&$1$-SSWD of $C_2$\\
\{[6,4],[7,6],[8,3]\}&\{[120,51],[121,65],[122,30],[123,10]\}\\ \hline
$2$-SSWD of $C_1$&$2$-SSWD of $C_2$\\
\{[9,10],[10,3]\}&\{[145,661],[146,135],[147,10]\}\\\hline
$3$-SSWD of $C_1$&  $3$-SSWD of $C_2$\\
   \{[10,1]\}& \{[150,151],[151,5]\}\\
\end{tabular}
}
\end{table}
\end{Example}

\section{Conclusions and remarks}
In this work, we construct four families of distance-optimal few-weight linear codes  and their subcode support weight distributions are determined as follows.
\begin{itemize}
  \item The Griesmer codes constructed in \textbf{Theorem~\ref{xxd2}} have the parameter $$ [t\frac{q^{k}-1}{q-1}-\sum_{i=1}^s\frac{q^{u_i}-1}{q-1},\,k,\,tq^{k-1}-\sum_{i=1}^sq^{u_i-1}]_q,$$ 
      where $1\leq s\leq t$ and $0< u_1< u_2< \cdots< u_s<k.$ 
      Assume $s=1$ and $C$ is a Griesmer code constructed in \textbf{Theorem~\ref{xxd2}},
we have that 
$$
    |\{j\,|\, A_j^{r}(C) \neq 0\}| \leq \min\{ u_1+1,\,r+1,\,k-r+1,\,k-u_1+1\} 
$$
    for  $1\leq r\leq k$.
  \item For $r\ge 2,$ the $r$-Griesmer codes constructed in \textbf{Theorem~\ref{xxd}} have the parameter $$[t\frac{q^k-1}{q-1}-\sum_{i=2}^s\frac{q^{u_i}-1}{q-1}+1,\,k,\,tq^{k-1}-\sum_{i=2}^sq^{u_i-1}]_q,$$
       where $2\leq s\leq t+1$ and $1=u_1< u_2< \cdots< u_s< k.$  
      Assume $s=2$ and $C$ is a $r$-Griesmer code constructed in \textbf{Theorem~\ref{xxd}},
we have that $$ |\{j\,|\, A_j^{r}(C) \neq 0\}| \leq \min\{ 2u_2,\,2(r+1),\,2(k-r+1),\,2(k-u_2+1)\}$$ 
for  $1\leq r\leq k$.
      
   \item For $r\ge 2,$ the $r$-Griesmer codes constructed in \textbf{Theorem~\ref{xxd4}} have the parameter $$[\frac{q^k-1}{q-1}-\frac{q^{u_2}-1}{q-1}-\frac{q^{u_3}-1}{q-1}+1\,,k,\,q^{k-1}-q^{u_2-1}-q^{u_3-1}]_q,$$ 
       where  $ 1=u_1< u_2<  u_3<k$  such that $u_2+u_3\leq k+1.$
       Assume  $C$ is a $r$-Griesmer code constructed in \textbf{Theorem~\ref{xxd4}},
we have that $$ |\{j\,|\, A_j^{r}(C) \neq 0\}| \leq 2\cdot\min\{ u_2u_3,\,(r+1)^2,\,(k-r+1)^2\}$$
for  $1\leq r\leq k$.
   The minimum Hamming distance of the dual code of a $r$-Griesmer code constructed in \textbf{Theorem~\ref{xxd4}}
   is large than the minimum Hamming distance of the dual code of a $r$-Griesmer code constructed in \textbf{Theorem~\ref{xxd}}.
   
   \item For $1\leq r\leq  2,$ the $r$-Griesmer codes constructed in \textbf{Theorem~\ref{chen}} have the parameter $$[\frac{q^k-1}{q-1}-m,\,k, \,q^{k-1}-m ]_q,$$ 
       where $3\leq k\leq m\leq q$.
\end{itemize}

 The subcode support weight distributions of the linear codes constructed in \textbf{Theorem~\ref{chen}} are not totally determined. It is interesting to determine the subcode support weight distributions of the linear codes constructed in \textbf{Theorem~\ref{chen}} completely.
 The method of Solomon-Stiffler \cite{SS} seem to be a promise way to construct distance-optimal few-weight linear codes  such that  their subcode support weight distributions can be determined. A particularly interesting problem is to construct more such codes by the method of Solomon-Stiffler.

\vskip 4mm

\noindent {\bf Acknowledgement.} This work was supported by The fellowship of China National
Postdoctoral Program for Innovative Talents (BX20240142), The Guangdong Key Laboratory of Data Security and Privacy Preserving (Grant No. 2023B1212060036)
and The National Natural Science Foundation of China (Grant No. 62032009, Grant No. 12271199, Grant No. 12171191, Grant No. 61902149 and Grant No. 62311530098).


\end{document}